\newtheorem{remark}{Remark}
\newtheorem{theorem}{Definition}
\newtheorem{proposition}{Proposition}
\begin{document}
\vspace{-10mm}
\title{ \huge Robust Energy-Efficient Resource Management, SIC Ordering, and Beamforming Design for MC MISO-NOMA Enabled 6G 
}
\vspace{-10mm}
		\author{Abolfazl Zakeri, \textit{Student Member, IEEE},~Ata Khalili,~\textit{ Member, IEEE},~Mohammad Reza Javan,~\textit{Senior Member, IEEE},~Nader Mokari,~\textit{Senior Member, IEEE}, and  Eduard A Jorswieck, \textit{Fellow, IEEE}
		\thanks{	
		A. Zakeri, 
		 A. Khalili and N. Mokari are with the Department of Electrical and Computer Engineering,~Tarbiat Modares University,~Tehran,~Iran, (e-mails: ata.khalili@ieee.org, abolfazlzakeri@gmail.com, and nader.mokari@modares.ac.ir). M.~R.~Javan is with the Department of Electrical Engineering, Shahrood University of Technology, Iran, (e-mail: javan@shahroodut.ac.ir).~Eduard A. Jorswieck is with Institute for Communications Technology, TU
			Braunschweig, Germany, Email: jorswieck@ifn.ing.tu-bs.de.
		}}

	\maketitle
	\vspace{-7mm}
	\begin{abstract}
This paper studies a novel approach for successive interference cancellation (SIC) ordering and beamforming in a  multiple antennas  non-orthogonal multiple access (NOMA) network with multi-carrier multi-user setup. To this end, we formulate a joint beamforming design, subcarrier allocation, user association,
and SIC ordering algorithm to maximize the worst-case energy efficiency (EE). The formulated problem is a non-convex mixed integer non-linear programming (MINLP)  which is generally
difficult to solve. To handle it, we first 
adopt the  linearizion technique as well as relaxing the
integer variables, and then we
employ the Dinkelbach algorithm to convert it into a more mathematically tractable form. The adopted non-convex optimization problem is transformed into
an equivalent rank-constrained semidefinite programming (SDP) and
is solved by SDP relaxation and exploiting  sequential fractional programming. 
%
%
%
Furthermore, to strike a balance between complexity and performance,
a low complex  approach based on 
 alternative optimization is adopted. 
Numerical results unveil that the proposed SIC ordering method outperforms the conventional existing works addressed in the literature.  
	\end{abstract}
\begin{IEEEkeywords}
	Multi carrier (MC), multiple input single output (MISO), non-orthogonal multiple access (NOMA), successive interference cancellation (SIC), beamforming, energy efficiency (EE).
\end{IEEEkeywords}
	\section{Introduction}
	\subsection{Motivations and State of the Art}
In the recent decades, wireless communications have appealed to a growing number of customers, demanding high quality services, and ubiquitous connections.~In order to fulfill these demands, 
the next generation of wireless networks, namely sixth-generation (6G)\footnote{Recently,  fifth generation of wireless networks is deployed and its evolution towards 6G has been started  \cite{6g1}.},
should be redesigned and exploit advanced technologies \cite{6G}. 
Regarding this, the network must be designed in such a way to dynamically change its architecture and the communications technologies.~In such
a flexible architecture, significant amount of signaling and computational
resources are needed to optimally manage
the network resources and enable to design the flexible resource sharing. Recently, various radio access network (RAN) architectures as distributed and centralized RAN (C-RAN) are developed to provide an efficient computational resource sharing and  resource utilization \cite{RAN}.
			To enable sustainable 6G networks,~new emerging techniques  such as new multiple access (MA)   and multiple antennas systems (MAS) are needed to  improve the network performance (e.g., energy efficiency (EE))
			 \cite{Ding_1}.~In this regard, non-orthogonal
			multiple access (NOMA) and multiple-input single-output (MISO)  are promising approaches which can significantly improve EE and provide massive connectivity applications, i.e., Internet of Thing (IoT)  compared to the orthogonal multiple access (OMA) and single antenna systems\cite {Ding_2,Ding_3,Yang,Dai}.~In fact,~improving the EE, fairness, and flexibility in resource allocation have turned to apply NOMA  systems as the main trend in the beyond current wireless network.
	The authors in \cite{Dai} present a basic principle of NOMA in which they state a systematic comparison among the different NOMA techniques from the viewpoint of the EE and receiver complexity.~In particular, NOMA utilizes power domain (PD-NOMA) based networks for MA as well as successive interference cancellation (SIC) which removes the undesired multiuser interference\cite{Islam}.~In NOMA, SIC ordering is one of the key challenges and it is critical for the performance of data transmission to handle the NOMA interference \cite{Unveal_SIC1,Unveal_SIC2}. 
			However, the SIC ordering problem has not been addressed well, 
			and there are open problems that need to be addressed properly \cite{Unveal_SIC1,Unveal_SIC2}. In fact, in most of the works, 
			SIC ordering is considered based on the channel gain which is not practical and optimal due to necessity of full channel state information (CSI) \cite{Ro2,Dai,FNOMA,MISO_Optimal,Moltafet,Minorization,ICC,Sharma}. 
			%
			Besides, there is uncertainty in the CSI that cannot be applicable for multiple antennas systems. 
		\\\indent
	This paper proposes a worst-case SIC ordering, resource allocation policy, and beamforming design for multicarrier (MC) MISO-NOMA networks.
		We would like to see how much  we can get performance gain in MAS for the SIC ordering as compared to OMA as well as traditional methods in which SIC ordering is based on the channel gains.
	\subsection{Related Works}
		In NOMA, the efficient allocation of scarce resources and SIC ordering are turned into a challenging necessity
		for improving users' satisfaction \cite{Unveal_SIC1}.~There
	are some attempts to find proper resource allocation
strategies which improves the overall performance of such networks\cite{Wei,Sharma,FNOMA,Moltafet,Minorization,ICC}.~For instance, the problem of power allocation and precoding design is proposed in \cite{Minorization} in which they employ single carrier multiple-input multiple-output (MIMO) NOMA systems.~The authors in \cite{FNOMA} propose an optimal resource allocation to maximize the system throughput for NOMA and full-duplex (FD) systems,~respectively. ~However,~the base station (BS) is equipped with a single antenna which cannot fully exploit the degrees of freedom of the network.~The
		works in \cite{MISO_Optimal,Fairness,Robust,Zakeri2} consider  beamforming design for MISO-NOMA systems to optimize the performance and cost of the system. In particular, the authors in \cite{Robust} propose a robust
		beamforming design for MISO-NOMA system to maximize
		the minimum data rate. In \cite{MISO_Optimal}, beamforming design and subcarrier allocation for maximizing the total data
		rate are proposed where optimal and sub-optimal solutions
are provided.~The beamforming design for maximizing the
		minimum EE and proportional fairness are
		developed in \cite{Fairness} to strike a balance between the EE of the system and the fairness between users.~Most of the previous works considered the fixed SIC ordering,
		in which the order of decoding at each receiver is determined
according to the channel gains \cite{FNOMA,MISO_Optimal,Moltafet,Minorization,ICC,Sharma}.~In SIC, the users are ordered and each user can remove the interference from users determined by the ordering scheme.~Although most of the works on the SIC ordering sort the users based on their channels, this is not a practical scenario and can not be guaranteed as well.~Also, it should be noted that sorting users for SIC based on the channel gains is neither 
 optimal nor practical scheme at all, especially in MAS due to unavailability of the full CSI
\cite{MISO_Optimal}.~To circumvent this problem, SIC ordering should be based on the network, channel gain, and the available resource conditions.
The authors in \cite{Zakeri} address this problem for single antenna BS and perfect CSI scenario which is not practical and appropriate for future networks due to considering single-antenna BS and also having perfect CSI channel. 

\textcolor{blue}{Besides, EE is an important metric for wireless networks, especially for enabling green communication. 
New communication technologies are proposed to improve the system EE.} In particular, various techniques are proposed which aim to enhance the network throughput while consuming less energy without sacrificing the quality of service (QoS).~\textcolor{blue}{At the same time, EE maximization problems are indispensable in NOMA systems, to strike a good throughput-power tradeoff and  improve the system performance which is noticed as one of the key performance metrics in future wireless networks.}~However, there is a deficit of existing works on the literature considering the EE.~For instance, in \cite{Y. Zhange}, EE maximization is studied to obtain an optimal power allocation based on the non-linear
	fractional programming method. Furthermore, in \cite{Fang},
	a subchannel assignment and power allocation is investigated to maximize the EE in NOMA networks.~However, in real scenarios assuming perfect CSI is not a valid assumption due to some issues like quantization and channel estimation errors as well as hardware limitations.~In this regards the works in \cite{Robust,Ro2,Ro3,FangJ} address the robust solution for imperfect CSI. In particular, in \cite{Robust,Ro2,Ro3},  robust designs for the MISO-NOMA systems are developed based on the bounded channel uncertainties. The joint user scheduling and power allocation are explored in \cite{FangJ} while considering  imperfect CSI.~User association in multi-cell NOMA systems is also challenging. Specifically, in addition to the NOMA interference caused by the co-channel interference, the interference between cells 
	also needs to be taken into consideration
 \cite{Comp,User}. 
Nonetheless to the best of the authors knowledge, the problem of beamforming design and SIC ordering in a MC MISO-NOMA enabled C-RAN network while considering imperfect CSI has not been investigated yet. In \cite{FNOMA,Zakeri,Minorization,Moltafet,ICC,Sharma}, the BS is equipped with single antenna while assuming perfect CSI.~The works in \cite{Ro2,Ro3,Robust} consider robust beamforming design while fixed SIC ordering.~Furthermore,~the authors in \cite{Comp,User} consider user association for the single antenna BS while SIC is based on the channel gains. In addition, in \cite{Zakeri}, the SIC ordering problem  for single antenna BS and perfect CSI is considered.~Consequently, user association policy and SIC ordering in a MISO-NOMA enabled C-RAN network  with imperfect CSI are still open problems which have not been addressed yet.
	\subsection{ Contributions and Research Outcomes}
In this paper, we aim to bridge the above mentioned knowledge gap.~In particular,~we propose a joint beamforming design, subcarrier allocation, user association, and SIC ordering algorithm which maximizes the EE of the network under imperfect CSI. 
		To this end, we formulate the problem of beamforming design and SIC ordering to maximize the worst-case system EE.~In our method, SIC ordering is considered as an optimization variable while in more  existing works, SIC ordering is fixed and depends on the channel gains.
		The optimization problem is a non-convex mixed integer non-linear programming which is very difficult to solve.~To handle it, we employ majorization minimization (MM), abstract Lagrangian method, semi-definite relaxation (SDR) method, and sequential fractional programming to handle the beamforming design and integer variables.

%
Our main contributions are summarized as follows:
	\begin{itemize}
	\item 	We propose a novel SIC ordering method for the downlink of a MC MISO NOMA. To this end, we formulate a novel optimization problem to maximize the EE by performing the subcarrier allocation, beamforming design, user association, and SIC ordering.~In particular, we formulate a new problem to investigate how to order users to apply
	successful SIC based on the available resources. Also, we derive the worst-case SIC ordering condition as an optimization constraint and then tackle its non-convexity. 
		\item
	We study the  practical imperfect CSI in C-RAN networks.~In doing so, we consider the worst-case EE to provide a robust resource allocation algorithm. 
		\item
	We propose a solution based on rank-constrained semidefinite programming (SDP) relaxation and exploiting  sequential fractional programming. In particular, we adopt
	MM approach and penalty factor to make it mathematically tractable and then we adopt Dinkelbach algorithm. Moreover, we provide a low complexity iterative algorithm in which the scheduling variable, i.e., user association and subcarrier assignment, is obtained through the matching algorithm.
	\item	
Numerical	results  reveal that the proposed worst-case EE maximization and SIC ordering algorithm can alleviate negative effect of imperfect of CSI, SIC, and limited power and spectrum resources on the network performance. Also, the results showcase  the superiority of the proposed algorithm compared to the other conventional schemes.   
	\end{itemize}
\indent The rest of this paper is organized as follows. The system model and problem formulation is discussed in Sec. \ref{systemmodelandproblemformulation}. The solution algorithm and complexity analysis  are presented in Sec. \ref{solutions}. Finally, the simulation analysis and conclusions are provided in Secs. \ref{Simulation_results} and \ref{Conclusion}, respectively.
\\\indent
	\textbf{Notations}: Vector and matrix
variables are indicated by bold lower-case and upper-case letters,
respectively.  $|.|$ indicate the absolute value, $ \|.\|$ or $ \|.\|_2$ denotes the  Euclidean norm ($\textit{l}_2$ norm), and $\bold{A}^\dagger$ and $ \bold{A}^{T}$ indicate the conjugate transpose and transpose of matrix $ \bold{A}$, respectively. \textcolor{blue}{ Also, $\text{Tr}[\bold{A}]$ denotes the trace of matrix $ \bold{A}$ and $\bold{I}_{M}$ denotes the $M \times M$ identity matrix. $a^{*}$ denotes the optimal value of variable $a$. $\nabla_{\bold{x}}g$ indicate the gradient vector of function $g(\bold{x})$. }
$\mathcal{S}$ denotes the set $\{1,2,\dots,S\}$ and $|\mathcal{S}|=S$ is the cardinality of set $\mathcal{S}$. $\mathcal{S}\backslash\{s\}$ discards the element $s$ from the set $\mathcal{S}$. $\mathcal{C}^{M\times 1}$ denotes the set of $ M $-by-$ 1 $ dimensional complex vectors, and operation $  \Bbb{E}\{.\} $  denotes the statistical expectation. 
	\section{\textcolor{blue}{ System Model and Problem Formulation}}\label{systemmodelandproblemformulation}
	\subsection{{ System Model Descriptions and Related Constraints}}
	 		In this paper, we consider a downlink scenario for a C-RAN consisting of  a set  of $ F $ \textcolor{blue}{active antenna units (AAUs)} indexed by $ f $, whose set is denoted by $\mathcal{F}=\{1,\dots,{F}\}$, where $f=1$ is a high power AAU,  and a base band unit (BBU).
	 	Let	$P_{\text{max}}^{f}$ be the transmit power budget of AAU $f$.
 	 Each AAU is	equipped  with $M$ antennas, uses a set $\mathcal{N}$ of $N$ shared subcarriers,  is connected  to the BBU with a limited bandwidth fronthaul/metro-edge link, and
	 		utilizes PD-NOMA  to transmit data to  
	 		 single antenna  end-users.  In fact, we consider a MC MISO-NOMA communication network setup.
 		  We denote the set of  all users as $\mathcal{K}=\{1,\dots,K\}$   which are randomly  distributed with the uniform
 		  distribution  inside the coverage/service area of the network \cite{User,Zakeri2}.
	 	 The  considered system model is depicted in Fig. \ref{PSMA}. 
	 	 Considering that user $ 2 $ performs SIC on users $ 1 $ and $ 3 $ over subcarrier $ 2 $,  we have such output for SIC ordering variable which will be explained in Sec. \ref{SIC_Section}, $ \xi^2_{1,2}=1,  \xi^2_{3,2}=1,  \xi^3_{2,2}=0,$ and $  \xi^1_{2,2}=0$. The definition of main notations are listed in Table \ref{Notations}.  
	 	 \begin{figure*}[t]
	 	 	\centering
	 	 	\includegraphics[width=.75\textwidth]{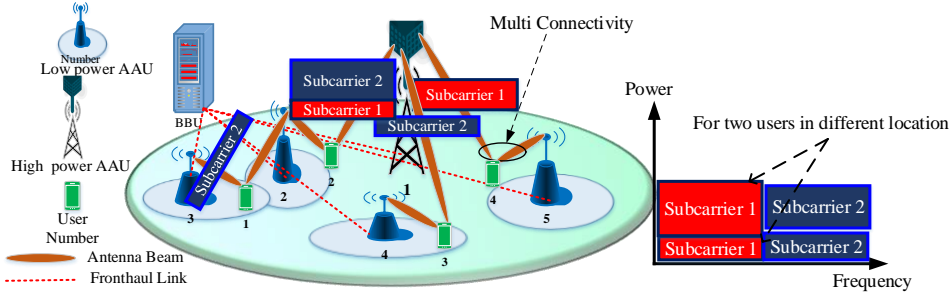}
	 	 	\vspace{-2em}
	 	 	\caption{Schematic presentation of our designed MISO NOMA-enabled C-RAN and an example of the proposed  SIC ordering algorithm.}
	 	 	\label{PSMA}
	 	 \end{figure*}
\\\indent
	In addition, $s_{k,n}^{f}$ is the signal\footnote{We assume its power is normalized to one, i.e., $\Bbb{E}\big\{|s_{k,n}^{f}|^{2}\big\}=1$.} of user $k$ over subcarrier $n$ from AAU $f$, and $\bold{w}_{k,n,f}=[w_{k,n,f}^{m}]\in\mathcal{C}^{M\times 1}$ is the vector of beamforming variables that is designed by AAU $ f $ for user $k$ over subcarrier $n$. 
		\begin{table}[!ht]
		\renewcommand{\arraystretch}{1.15}
		\centering
		\caption{Table of the main notations}
		\label{Notations}
			\textcolor{blue}{
		\begin{tabular}{ c|| l }	
			\hline
			{\textbf{Notation}}& {\textbf{Description}}\\\hline
		  \multicolumn{2}{l}{~~~~~~~~~~~~\textbf{Notations/parameters}}
			\\\hline
			$\mathcal{F}/F/f$&Set/number/index of all AAUs  in the network
			\\\hline
			 $ \mathcal{K}/K/k $&Set/number/index of all users in the network
			\\ \hline
			$\mathcal{N}/N/n$ &Set/number/index of  the shared subcarriers in each AAU\\ \hline
			${M}/m$ &Number/index  of  antennas in each AAU \\ \hline
			$P_{\text{max}}^{f}$&Maximum allowable transmit power of AAU $f$
			\\\hline
			$P_{\text{Total}}$&Total consumed power
			\\\hline
				$R_{\max}^{f}$&Maximum capacity of fronthaul link AAU $ f $
			\\\hline
			$\bold{h}_{k,n,f}$ &Real channel coefficient between user $k$ \\&and AAU $ f $  on  subcarrier $n$
						\\\hline
			$\tilde{\bold{h}}_{k,n,f}$ &Estimated channel coefficient between user $k$ \\&and AAU $ f $  on  subcarrier $n$
			\\\hline
			$\boldsymbol\epsilon_{k,n,f}$ & Channel estimation error for user $k$ and AAU $ f $  \\& on  subcarrier $n$
			\\\hline
			$ \delta_{k,n}^{f} $ & Channel uncertainty radius  for user $k$ and AAU $ f $  \\& on  subcarrier $n$
			\\\hline
			$L_{n}^{f}$& Maximum reused number of each subcarrier $n$ at  AAU $f$ \\\hline
			${\sigma_{k,n,f}^{2}}$ &Variance of noise at user $k$ on subcarrier $n$ from AAU $f$\\\hline
			$s_{k,n}^{f}$ &Transmit signal at user $k$ on subcarrier $n$ from AAU $f$\\\hline	
			$r_{k,n}^{f}$ &Achieved rate of user $k$ on subcarrier $n$ from AAU $f$\\\hline	
			$\varphi_{f}$ & Preprocessing weight of each fronthual link of AAU $f$ \\\hline
			$\beta$ & Drain efficiency of the power amplifier
			\\\hline
			 \multicolumn{2}{l}{~~~~~~~~~~~~\textbf{ Optimization Variables}}
				\\\hline ${\rho_{k,n}^{f}}$ & Binary subcarrier assignment variables,
			equals to 1 means \\ & that user $k$  is scheduled to AAU $f$ and subcarrier $n$, \\& otherwise, it is 0\\\hline
							$\xi_{i,n}^{k}$& Binary SIC ordering variable, which $\xi_{i,n}^{k}=1$, if user $k$ \\& decodes the signal of user $i$ 	on subcarrier $n$, else $\xi_{i,n}^{k}=0$
							\\\hline
							${\bold{w}_{k,n,f}}$ & Beamforming vector from AAU $ f $ to user $k$ \\&on subcarrier $n$
							\\ \hline
		\end{tabular}
		}
	\end{table}	
Let us define a joint subcarrier and user associations binary variable\footnote{Herein, we call it  the scheduling variable.},  $\rho_{k,n}^{f}$, if subcarrier $n$ is assigned to user $k$ that is served by $f$, $\rho_{k,n}^{f}=1$, otherwise, $\rho_{k,n}^{f}=0$.
	We introduce our scheduling policy in terms of subcarrier 
	assignment technique and connectivity of users to AAUs as follows:
\\$\bullet$~\textit{Subcarrier Assignment as Multiple Access Technique:}
\textcolor{blue}{By exploiting NOMA, each subcarrier $n$ can be assigned to at most $L_{n}^{f}$ users in AAU $f$  which is ensured by}
\begin{align}
\sum_{ k\in \mathcal{K}}\rho_{k,n}^{f}\le L_{n}^{f},\,\forall n\in\mathcal{N}, f\in\mathcal{F}.
\end{align}
\\$\bullet$~\textit{User Association as Connectivity Technique:}
In general, user association refers to find an algorithm to assign users to the radio stations. 
We propose a novel user association policy  where each user can be configured  to receive its data on different subcarriers from different AAUs. We call it  multi-connectivity technique which is different from the coordinated multipoint  technologies\footnote{Because it dose not require synchronization between different AAUs.} \cite{Multi-Connectivity}. 
\textcolor{blue}{
Therefore, each user on each subcarrier can be connected to at most one AAU   which is ensured by the following constraint:	
	\begin{align}\label{User_Assosiaction_2}
{ \sum_{f\in\mathcal{F}}\rho_{k,n}^{f}\le 1,
		\forall n\in \mathcal{N},{k}\in\mathcal{K}}.
\end{align}
}

\textcolor{blue}{ Let  $\bold{h}_{k,n,f}\in\mathcal{C}^{M\times 1}$ be the channel coefficient between user $k$ and AAU $f$ on subcarrier $n$.
Following channel uncertainty, i.e., imperfect CSI model, we assume that the global CSI is not known  because of estimation errors and/or feedback delays \cite{Robust,Good_Robust}.} 
   Therefore,  the  real channel gain is given as follows\cite{Robust}:
\begin{align}
\bold{h}_{k,n,f}=\tilde{\bold{h}}_{k,n,f}+\boldsymbol{\epsilon}_{k,n,f},\,\forall k,n,f,
\end{align}
where $\tilde{\bold{h}}_{k,n,f}$ denotes the estimated channel gain and $\boldsymbol{\epsilon}_{k,n,f}$ indicates the error of estimation which lies in a bounded spherical set as given by  $\|\boldsymbol{\epsilon}_{k,n,f}\|_{2}^{2}\le \delta_{k,n}^{f}$, where 
	$ \delta_{k,n}^{f} $ is the channel uncertainty radius and is assumed be a small constant \cite{Good_Robust, CSI_Robust_TVT}. 
	In the other words, 
	 we have $\bold{h}_{k,n,f}\in\mathcal{H}_{k,n,f}$, where $\mathcal{H}_{k,n,f}$ is as follows:
\begin{align}
\mathcal{H}_{k,n,f}\triangleq\Big\{\tilde{\bold{h}}_{k,n,f}+\boldsymbol{\epsilon}_{k,n,f}~\big|~\|\boldsymbol{\epsilon}_{k,n,f}\|_{2}^{2}\le \delta_{k,n}^{f}\Big\}.
\end{align}
The indispensable part of NOMA is the SIC algorithm which is applied in the receiver side to handle the NOMA interference. Since in NOMA, the SIC ordering has a key impact on the received signal to interference plus noise ratio (SINR), and the performance of NOMA for cell-edge or cell-central users \cite{Zakeri}, we devise a new SIC ordering method as follows:
	\subsubsection{Proposed SIC Ordering Algorithm}\label{SIC_Section}
In contrast to sorting  users  based on the channel condition to perform SIC, we introduce
		a new binary variable as $\xi_{i,n}^{k}$,   where $\xi_{i,n}^{k}=1$ if user $k$ 
		decodes the signal of user $i$ on the assigned subcarrier $n$ (assuming both users $k$ and $i$ are multiplexed on subcarrier $n$), and otherwise, $\xi_{i,n}^{k}=0$. Note that users $ i $ and $ k $ can be connected to different AAUs.  \textcolor{blue} {It is worth noting  that the} traditional SIC ordering is based on the channel power gain, channel gain \cite{Ding_2,Ding_3,FNOMA}, and normalized noise power \cite{Joint_Optimal_20}.
		\\$\bullet$~\textit{ Worst-Case Data Rate:}
\textcolor{blue}{The achievable  rate of user $k$ on subcarrier $n$ and AAU $f$
	with channel $\bold{h}_{k,n,f}$ is obtained by \eqref{rate_for} shown at the top of next page.} 
	\begin{figure*}
	\begin{align}
 \label{rate_for}
	r_{k,n}^{f}=
	\log_2\left(1+\frac{\rho_{k,n}^{f}|\bold{h}_{k,n,f}^{\dagger}\bold{w}_{k,n,f}|^2}	
	{
		\sum\limits_{f'\in\mathcal{F}}
		\sum\limits_{i\in \mathcal{K}\backslash \{k\}}\rho_{i,n}^{f'}\cdot(1-\xi_{i,n}^{k})\cdot |\bold{h}_{k,n,f'}^{\dagger}\bold{w}_{i,n,f'}|^2
		+\sigma_{k,n,f}^2}\right),
	\end{align}
	\hrule
\end{figure*}
The worst-case data rate of user $k$ over the uncertainty set can be formulated as
	\begin{align}\label{worst_case_Rate}
	r_{k}=\sum_{f'\in \mathcal{F}}\sum_{ n \in \mathcal{N}} \min_{\Big\{\bold{h}_{k,n,f}\in\mathcal{H}_{k,n,f},\forall f
		\Big\}}r_{k,n}^{f'},~~\forall k\in\mathcal{K}.
	\end{align}
	$\bullet~$\textit{Successful Decoding Constraints as SIC Constraints:}
To ensure that user $k$ can successfully cancel the signal of user $i$, i.e., user $k$ is determined to perform SIC on subcarrier $n$ which means  $\xi_{i,n}^{k}=1$, we consider the following three constraints should be satisfied, simultaneously.
\begin{enumerate}
    \item 
  SIC ordering variable can be $ 1 $, when both users $ k $ and $ i $ are multiplexed on subcarrier $ n $ which is ensured by
\begin{align}\label{SIC_Sub_Con}
\xi_{i,n}^{k}\le \sum_{f\in\mathcal{F}}\rho_{k,n}^{f}\cdot\sum_{f'\in\mathcal{F}}\rho_{i,n}^{f'}, \forall n\in\mathcal{N},~i\neq k, k,i\in\mathcal{K}.
\end{align}
\item
	Ensuring that  one of the user $ k $ or user $ i $ performs SIC over the multiplexed subcarrier $ n $ by
\begin{align}
\label{SIC_Con_1}
\xi_{i,n}^{k}+\xi_{k,n}^{i}\le 1,~\forall n\in\mathcal{N}, k,i\in\mathcal{K}, k\neq i.
\end{align}
\item
Successful decoding constraint, which ensures that signal of user $ i $ (connected to $ f' $) on subcarrier $n$ is detected and cancelled by user $ k $ (connected to $ f $) in the worst condition (based on \eqref{rate_for}), is given by \eqref{SIC_Ordeing} shown at the top of the next page.
\begin{figure*}
\begin{align}\label{SIC_Ordeing}
&\underbrace{\xi_{i,n}^{k}}_{\text{A}} \underbrace{\max_{\Big\{\bold{h}_{i,n,f}\in\mathcal{H}_{i,n,f},\forall f\Big\}}}_{\text{B}}
	\underbrace{\log_2\left(1+\frac{\rho_{i,n}^{f'}|\bold{h}_{i,n,f'}^{\dagger}\bold{w}_{i,n,f'}|^2}	
	{
		\sum_{f\in\mathcal{F}}
		\sum_{k'\in \mathcal{K}\backslash \{i\}}\rho_{k',n}^{f}(1-\xi_{k',n}^{i}) |\bold{h}_{i,n,f}^{\dagger}\bold{w}_{k',n,f}|^2
		+\sigma_{i,n,f}^2}\right)}_{\text{C}}
\\&\nonumber
\le
\underbrace{\xi_{i,n}^{k}}_{\text{A}} \underbrace{\min_{\bold{h}_{k,n,f}\in\mathcal{H}_{k,n,f},\forall f}}_{\text{D}}
\underbrace{\log_2\left(1+\frac{\rho_{i,n}^{f'}|\bold{h}_{k,n,f'}^{\dagger}\bold{w}_{i,n,f'}|^2}	
	{
		\sum_{f\in\mathcal{F}}
		\sum_{k'\in \mathcal{K}\backslash \{i\}}\rho_{k',n}^{f}(1-\xi_{k',n}^{i}) |\bold{h}_{k,n,f}^{\dagger}\bold{w}_{k',n,f}|^2
		+\sigma_{k,n,f}^2}\right)}_{\text{E}}.
\nonumber
\end{align}
\hrule
\end{figure*}
In \eqref{SIC_Ordeing}, part A
ensures the constraint holds for user $k$ that is determined to perform SIC to decode and remove user $i$'s signal where both of them are multiplexed on subcarrier $n$.
Parts B and D assure that the constraint holds for the worst-case estimation of CSI.  To be clear, assume an example where we have two parameters as $ a\in[a_{\min},a_{\max}] $ and $ b\in[b_{\min},b_{\max}] $, and to ensure  an inequality $a\le b$ for all/worst-cases, obviously, it occurs for $a_{\max}\le b_{\min}$.
Part  C is the obtained rate of user $i$ and part E is the rate of user $i$ achieved by user $k$ \cite{MISO_Optimal,Ding_2,Ding_3}. It should be noted that  \eqref{SIC_Ordeing} is sufficient (but not necessary) for SIC.
\end{enumerate}
\subsubsection{Fronthaul Link Capacity Constraints}
Since the bandwidth of  fronthaul links are  limited, we introduce a new link capacity constraint as 
 \begin{align}\label{Fronthaul_Con}
 \sum_{k\in\mathcal{K}}\sum_{n\in\mathcal{N}}\varphi_{f} \cdot r_{k,n}^{f}\le R_{\max}^{f}, \forall f\in\mathcal{F},
 \end{align}
 where $\varphi_{f}$ and $R_{\max}^{f}, \forall f\in\mathcal{F}$ are the preprocessing weight related to the fronthaul transmission technologies and maximum available transmission capacity  of AAU $f$, respectively.

\subsection{{Objective Function and Problem Formulation}}\label{problems formulation}
In this section, the considered objective function and problem formulation are introduced.
\subsubsection{Objective Function}
Considering the worst-case channel uncertainties, the main goal of the optimization problem is to maximize the worst-case
 global EE (GEE) of the system. 
GEE is defined as the ratio of
 	the global achievable sum rate to the total consumed
 	power \cite{GEE},
 	and the worst-case of GEE is obtained with considering the worst-case CSI in our model. To formulate the worst-case EE, 
we need the worst-case throughput of the system and the total consumed power. 
The total worst-case data rate can be calculated by
\begin{align}
\label{Total_data_rate}
R_{\text{Total}}^{\text{Worst}}=\sum_{k\in \mathcal{K}} r_{k}.
\end{align} 
The total power 
  consumption of the system is obtained by \cite{power_model}
\begin{align}\label{Total_Power}
P_{\text{Total}}=\sum_{f\in \mathcal{F}}\underbrace{\sum_{k\in\mathcal{K}}\sum_{n\in\mathcal{N}}\frac{1}{\beta}\|\bold{w}_{k,n,f}\|_2^2}_{P_{\text{TX}}^{f}}+P_{\text{Static}},
\end{align}
where $0<\beta<1$ is the drain efficiency of the power amplifier
 and $P_{\text{Static}}$ is the static term of power consumption which is obtained as
$P_{\text{Static}}=\sum_{f\in \mathcal{F}}P_{\text{Static}}^{f}$, where  $ P_{\text{Static}}^{f} $ is the static term of power which  is given by
\begin{equation}
P_{\text{Static}}^{f}=\left\{
\begin{array}{ll}
P_{\text{Hardware}}^{f}& 0< P_{\text{TX}}^{f}\le P_{\max}^{f},\\
P_{\text{Sleep}}^{f}&P_{\text{TX}}^{f}=0,
\end{array}
\right.
\end{equation}
where $P_{\text{Sleep}}^{f}$ is the consumed power at BBU in the sleep mode of AAU $ f $ and  $P_{\text{Hardware}}^{f}=C_{\text{Circuit}}\times M$ is the power that is used by hardware at AAU $ f $ in the transmission mode, and $C_\text{Circuit}$ is the consumed circuit power constant that is used for the signal processing functions at AAU which includes the power dissipation in the
filtering, frequency synthesizer, digital-to-analog converter, etc. Therefore, the worst-case EE of the system is calculated by 
\begin{align}
\eta_{\text{EE}}^{\text{Worst}}=\frac{R_{\text{Total}}^{\text{Worst}}}{P_{\text{Total}}}, 
\end{align}  
where $R_{\text{Total}}^{\text{Worst}}$ and $P_{\text{Total}}$ are given by \eqref{Total_data_rate} and \eqref{Total_Power}, respectively.
	\subsubsection{Problem Formulation} Based on these definitions and assumptions, our main aim is to maximize the worst case of EE considering beamforming and SIC ordering constraints. The optimization problem is mathematically formulated as follows:
		\begin{subequations}\label{sic_order_main}
			\begin{align}\label{Reeq8a11}
			\max_{\bold{W},\boldsymbol{\xi},\boldsymbol{\rho}}\;&~\eta_{\text{EE}}^{\text{Worst}}
			\\\label{power_cons1}
			\text{s.t.}~~&
			\text{C}_{1}:\sum_{k\in \mathcal{K}}\sum_{n\in \mathcal{K}}\rho_{k,n}^{f}\|\bold{w}_{k,n,f}\|_{2}^{2}\le P_{\text{max}}^{f},\forall f\in\mathcal{F},
			\\&\label{NOMA_Con}
				\text{C}_{2}:\sum_{ k\in \mathcal{K}}\rho_{k,n}^{f}\le L_{n}^{f},\,\forall n\in\mathcal{N}, f\in\mathcal{F},
			\\&\label{BS_Ass_Con}	
			\text{C}_{3}:{ \sum_{f\in\mathcal{F}}\rho_{k,n}^{f}\le 1,
				\forall n\in \mathcal{N},{k}\in\mathcal{K}},
			%
				\\&\label{Reeq8i1}
			\text{C}_{4}:
			\xi_{i,n}^{k},\rho_{k,n}^{f}\in
			\begin{Bmatrix}
			0,
			1
			\end{Bmatrix},\,\,\forall k\in \mathcal{K}, n\in\mathcal{N}, \forall f\in\mathcal{F},
						\\&\label{SICs.11}
			 \nonumber
		  \eqref{SIC_Sub_Con}, \eqref{SIC_Con_1}, \eqref{SIC_Ordeing}, \eqref{Fronthaul_Con}, 
			\end{align}
		\end{subequations}
where $\bold{W}=\left[{w}_{k,n,f}^{m}\right]$,~$\boldsymbol{\rho}=\left[\rho_{k,n}^{f}\right]$,~ and $\boldsymbol{\xi}=\left[\xi_{i,n}^{k}\right]$.
 \textcolor{blue}{
 Constraint \eqref{power_cons1} indicates the maximum available power budget and constraint \eqref{NOMA_Con} verifies that each subcarrier $n$ in each AAU $f$ can be utilized no more than $L_{n}^{f}$ times, recognized as NOMA constraint. 
 Constraint \eqref{BS_Ass_Con} ensures that each user on each subcarrier is served only by one AAU, \eqref{Reeq8i1} stands for binary variables, and  \eqref{SIC_Sub_Con} indicates that the SIC on each subcarrier between users which are multiplexed on that subcarrier. Moreover,~\eqref{SIC_Con_1} indicates that for any two users only one of them can perform SIC on another one and 	\eqref{SIC_Ordeing}
	  ensures that user $k$ decodes the message of user $i$ on  the assigned subcarrier $n$, successfully \cite{Ding_2, Fairness}.
	 Finally,  constraint \eqref{Fronthaul_Con} is the link capacity restriction of fronthaul links.
	 }
	  	 %
	\section{Proposed Solution Methods}\label{solutions}
The optimization problem in \eqref{sic_order_main} is a non-convex mixed integer non-linear programming (MINLP) which is complicated to solve. We propose two different solution algorithms which are discussed in the following.
\subsection{Algorithm 1: One Step Solution}
In this section, we explain our proposed one-step solution, i.e., all variables are obtained without using alternating approach.
First, let us  define matrices $\bold{W}_{k,n,f}$ and $\tilde{\bold{H}}_{k,n,f}$ with size $ M\times M $ as 
 $\bold{W}_{k,n,f}\triangleq\bold{w}_{k,n,f}\bold{w}_{k,n,f}^{\dagger}$
	and $\tilde{\bold{H}}_{k,n,f}\triangleq\tilde{\bold{h}}_{k,n,f}\tilde{\bold{h}}_{k,n,f}^{\dagger}$, respectively. To this end, we rewrite $\|\bold{h}_{k,n,f}^{\dagger}\bold{w}_{k,n,f}\|^2$ as follows:
	\begin{align}
\nonumber	&\|\bold{h}_{k,n,f}^{\dagger}\bold{w}_{k,n,f}\|^2=
	\nonumber\\&
	\bold{w}_{k,n,f}^{\dagger}(\tilde{\bold{h}}_{k,n,f}+\boldsymbol{\epsilon}_{k,n,f})^{\dagger}(\tilde{\bold{h}}_{k,n,f}+\boldsymbol{\epsilon}_{k,n,f})\bold{w}_{k,n,f}
	\nonumber\\ &=\bold{w}_{k,n,f}^{\dagger}(\tilde{\bold{h}}_{k,n,f}+\boldsymbol{\Delta}_{k,n,f})\bold{w}_{k,n,f}
	\nonumber\\&=\text{Tr}[(\tilde{\bold{H}}_{k,n,f}+\boldsymbol{\Delta}_{k,n,f})\bold{W}_{k,n,f}],
\end{align}
	where $\boldsymbol{\Delta}_{k,n,f}=\tilde{\bold{h}}_{k,n,f}\boldsymbol{\epsilon}_{k,n,f}^{\dagger}+\boldsymbol{\epsilon}_{k,n,f}\tilde{\bold{h}}_{k,n,f}+\boldsymbol{\epsilon}_{k,n,f}\boldsymbol{\epsilon}_{k,n,f}^{\dagger}$ is a norm-bounded matrix which satisfies the following region: 
\begin{align}
\label{Delta}
		\|\boldsymbol{\Delta}_{k,n,f}\|&\leq 	\|\tilde{\bold{h}}_{k,n,f}\boldsymbol{\epsilon}_{k,n,f}^{\dagger}\|+\|\boldsymbol{\epsilon}_{k,n,f}\tilde{\bold{h}}_{k,n,f}^{\dagger}\|+\|\boldsymbol{\epsilon}_{k,n,f}\boldsymbol{\epsilon}^{\dagger}_{k,n}\|,\nonumber\\
		&\leq 	\|\tilde{\bold{h}}_{k,n,f}\|\|\boldsymbol{\epsilon}_{k,n,f}^{\dagger}\|+\|\boldsymbol{\epsilon}_{k,n,f}\|\|\tilde{\bold{h}}_{k,n,f}^{\dagger}\|
		\nonumber\\&
		+\|\boldsymbol{\epsilon}_{k,n,f}\|\|\boldsymbol{\epsilon}^{\dagger}_{k,n}\|
	\nonumber\\
		&
		=(\delta^{f}_{k,n})^2+2\delta_{k,n}^f\|\tilde{\bold{h}}_{k,n,f}\|\triangleq e_{k,n}^{f}.
	\end{align}  
	Therefore, equation \eqref{rate_for} can be rewritten as \eqref{Rate_Error} shown at the top of the next page.
		\begin{figure*}
\begin{align}\label{Rate_Error}\tiny
	&{r}^{f}_{k,n}=
	\log_2\left(1+\frac{\rho_{k,n}^{f}\text{Tr}[(\tilde{\bold{H}}_{k,n,f}+\boldsymbol{\Delta}_{k,n,f})\bold{W}_{k,n,f}]}{\sum_{f'\in\mathcal{F}}
		\sum_{i\in \mathcal{K}\backslash \{k\}}\rho_{i,n}^{f'}(1-\xi_{i,n}^{k}) \text{Tr}[(\tilde{\bold{H}}_{k,n,f'}+\boldsymbol{\Delta}_{k,n,f'})\bold{W}_{i,n,f'}]+\sigma_{k,n,f}^2}\right).
\end{align}
	\hrule
\end{figure*}
	\textcolor{blue}{
	Now, we  aim at maximizing the worst-case data rate \eqref{Rate_Error}. 
	Since the log function is a monotonic function, finding the worst-case would be done over the SINR in (\ref{Rate_Error}). This can be obtained by minimizing \eqref{Rate_Error} over $\Delta_{k,n,f}$ and 
	$\Delta_{k,n,f'} $ where indexes $f$ and $f'$  can be the same, i.e., when we calculate the intra-cell interference.
	One conservative method to find the minimum of the SINR is minimizing the numerator and maximizing the denominator of SINR in (\ref{Rate_Error}) with respect to norm-bounded matrices \cite{Proof_Worst}. Note that by this method, we provide a strictly bounded robust solution (SBRS) \cite{Proof_Worst}.
	Motivated by this idea,~the lower bound of SINR in (\ref{Rate_Error}) subject to \eqref{Delta} can be obtained by solving
	the following optimization problems:
\begin{align}
&\min_{	\|\boldsymbol{\Delta}_{k,n,f}\|\leq e_{k,n,f}} \text{Tr}[(\tilde{\bold{H}}_{k,n,f}+\boldsymbol{\Delta}_{k,n,f})\bold{W}_{k,n,f}], \label{minimumterm}
\\&
\label{21max}
\max_{	\|\boldsymbol{\Delta}_{k,n,f'}\|\leq e_{k,n,f'}}
		\sum_{f'\in\mathcal{F}}
		\sum_{i\in \mathcal{K}\backslash \{k\}}\text{Tr}[(\tilde{\bold{H}}_{k,n,f'}+\boldsymbol{\Delta}_{k,n,f'})\bold{W}_{i,n,f'}].
\end{align}
	Here, we apply the Lagrangian-based method to find  the optimal solution of \eqref{minimumterm} and \eqref{21max} for the given the beamforming matrices \cite{Proof_Worst}.
	}
	\textcolor{blue}{
	\begin{proposition} \label{Pro_SBRS}
		For the given $\bold{W}_{k,n,f}$, the minimum and maximum norm-bounded matrices of \eqref{minimumterm} and \eqref{21max} given as follows,~respectively, for all $ k,i,n,f,f'$:
\begin{align}
	&\Delta^{f,\min}_{k,n}=-e_{k,n,f} \frac{\bold{W}^{\dagger}_{k,n,f}}{\|\bold{W}_{k,n,f}\|}\label{min_error},
	~\forall k, n, f, 
	\\
	&\Delta^{f',\max}_{k,n}=e_{k,n,f'} \frac{\bold{W}^{\dagger}_{i,n,f'}}{\|\bold{W}_{i,n,f'}\|},~\forall k,i\in\mathcal{K}, n, f, k\neq i\label{max_error}.
\end{align}
	\end{proposition}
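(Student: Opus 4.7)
The plan is to recognize both problems as linear objectives in $\boldsymbol{\Delta}$ over a Frobenius-ball constraint, so that the entire analysis reduces to a single Cauchy–Schwarz / Lagrangian calculation. First I would observe that since $\tilde{\bold{H}}_{k,n,f}\bold{W}_{k,n,f}$ does not depend on $\boldsymbol{\Delta}_{k,n,f}$, problem \eqref{minimumterm} is equivalent to
\begin{align*}
\min_{\|\boldsymbol{\Delta}_{k,n,f}\|\le e_{k,n,f}} \mathrm{Tr}\bigl[\boldsymbol{\Delta}_{k,n,f}\bold{W}_{k,n,f}\bigr],
\end{align*}
and similarly for \eqref{21max}, where after absorbing $\sum_{i}\bold{W}_{i,n,f'}$ into the adversary's inner product (or, more conservatively under the SBRS philosophy, bounding each interferer $i$ separately), each term is of the form $\mathrm{Tr}[\boldsymbol{\Delta}\,\bold{W}_{i,n,f'}]$ with a norm-ball constraint.

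Next I would apply the Cauchy–Schwarz inequality in the Frobenius inner product, $\mathrm{Tr}[\boldsymbol{\Delta}\bold{W}] = \langle \boldsymbol{\Delta},\bold{W}^{\dagger}\rangle_F$, giving $|\mathrm{Tr}[\boldsymbol{\Delta}\bold{W}]|\le \|\boldsymbol{\Delta}\|_F\|\bold{W}\|_F \le e\|\bold{W}\|_F$, with equality when $\boldsymbol{\Delta}$ is aligned (or anti-aligned) with $\bold{W}^{\dagger}$. Equivalently, I would form the Lagrangian $L(\boldsymbol{\Delta},\lambda)=\mathrm{Tr}[\boldsymbol{\Delta}\bold{W}]+\lambda(\|\boldsymbol{\Delta}\|_F^{2}-e^{2})$; the KKT stationarity condition $\nabla_{\boldsymbol{\Delta}^{*}}L=\bold{W}^{\dagger}+2\lambda\boldsymbol{\Delta}=0$ yields $\boldsymbol{\Delta}\propto \bold{W}^{\dagger}$, and the active primal feasibility $\|\boldsymbol{\Delta}\|_F=e$ pins the multiplier to $|2\lambda|=\|\bold{W}\|_F/e$. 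Choosing the sign of $\lambda$ to enforce minimization gives the stated $\Delta^{f,\min}_{k,n}=-e_{k,n,f}\bold{W}^{\dagger}_{k,n,f}/\|\bold{W}_{k,n,f}\|$; choosing the opposite sign for maximization with respect to the $i$-th interference term yields $\Delta^{f',\max}_{k,n}=+e_{k,n,f'}\bold{W}^{\dagger}_{i,n,f'}/\|\bold{W}_{i,n,f'}\|$. Verifying feasibility (both candidates achieve norm exactly $e$) and optimality (they achieve the Cauchy–Schwarz bound with the correct sign) closes the argument.

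The main conceptual obstacle I expect is not the calculus but justifying why the maximizer in \eqref{21max} is stated per interferer $i$ despite the outer summation. The point to make explicit is that the SBRS derivation allows the adversary to be decoupled across $i$ (giving a conservative but tractable upper bound on aggregate interference), which is exactly the sense in which the Lagrangian solution is applied term-by-term; pooling into a single $\boldsymbol{\Delta}$ would give a different (tighter) worst case of the form $e\,\|\sum_{i}\bold{W}_{i,n,f'}\|_F$, which is not what the proposition claims. A secondary technical subtlety is the complex-matrix derivative: I would either handle it via Wirtinger calculus on $\boldsymbol{\Delta}$ and $\boldsymbol{\Delta}^{*}$ as independent variables, or simply invoke the Frobenius Cauchy–Schwarz inequality directly, which sidesteps the differentiation issue entirely while still giving the closed-form extremizers in \eqref{min_error}–\eqref{max_error}.
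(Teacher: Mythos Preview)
Your proposal is correct and takes essentially the same Lagrangian/KKT route as the paper: form the Lagrangian with the Frobenius-ball constraint, differentiate to get $\boldsymbol{\Delta}\propto\bold{W}^{\dagger}$, fix the multiplier via the active constraint, and pick the sign for min versus max (the paper additionally checks the Hessian sign). Your Cauchy--Schwarz framing and your explicit remark on the per-interferer decoupling in \eqref{21max} under the SBRS philosophy are clarifications the paper leaves implicit, but they do not change the underlying argument.
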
 
\begin{proof}
	Please see Appendix A.
\end{proof}
The following remark provides the exact solution for the worst-case of SINR.
}
\textcolor{blue}{
\begin{remark} \label{EXRS}
The exact worst-case of the SINR can be obtained by using the fractional programming.
The 
minimization of the SINR over a bounded error is a fractional program as follows:
\begin{align}
\label{Frac}
&\min_{\|\boldsymbol{\Delta}_{k,n,f}\|\le e_{k,n,f}}
\\&\nonumber
    \frac{\text{Tr}[(\tilde{\bold{H}}_{k,n,f}+\boldsymbol{\Delta}_{k,n,f})\bold{W}_{k,n,f}]}{\sum_{f'\in\mathcal{F}}
		\sum_{i\in \mathcal{K}\backslash \{k\}} \text{Tr}[(\tilde{\bold{H}}_{k,n,f'}+\boldsymbol{\Delta}_{k,n,f'})\bold{W}_{i,n,f'}]+\sigma_{k,n,f}^2}.
\end{align}
The solution of \eqref{Frac} can be obtained using the idea of fractional programming.
This method provides a numerical solution for the worst-case $\Delta_{k,n,f}$. 
Hence, by idea of fractional programming, we restate \eqref{Frac} in parametric form as:
\begin{align}
    \min_{\|\boldsymbol{\Delta}_{k,n,f}\|\le e_{k,n,f}}
   F(\boldsymbol{\Delta}_{k,n,f},\nu),
\end{align}
where 
\begin{align}
    &   F(\boldsymbol{\Delta}_{k,n,f},\nu) \triangleq
    {\text{Tr}[(\tilde{\bold{H}}_{k,n,f}+\boldsymbol{\Delta}_{k,n,f})\bold{W}_{k,n,f}]}
    \\&\nonumber
    -\nu\Big({\sum_{f'\in\mathcal{F}}
		\sum_{i\in \mathcal{K}\backslash \{k\}} \text{Tr}[(\tilde{\bold{H}}_{k,n,f'}+\boldsymbol{\Delta}_{k,n,f'})\bold{W}_{i,n,f'}]+\sigma_{k,n,f}^2}\Big),
\end{align}
and 
$\nu$ is an auxiliary variable updated in each iteration of Dinkelbach algorithm by
\begin{align}
\label{Q_Update}
&\nu=
\\&\nonumber
\frac{\text{Tr}[(\tilde{\bold{H}}_{k,n,f}+\boldsymbol{\Delta}_{k,n,f}^{*})\bold{W}_{k,n,f}]}{{\sum_{f'\in\mathcal{F}}
		\sum_{i\in \mathcal{K}\backslash \{k\}} \text{Tr}[(\tilde{\bold{H}}_{k,n,f'}+\boldsymbol{\Delta}_{k,n,f'}^{*})\bold{W}_{i,n,f'}]+\sigma_{k,n,f}^2}},
		\end{align} 
		where $\boldsymbol{\Delta}_{k,n,f}^{*}$ is obtained by \eqref{Delta_Max} which will be explained next. In each iteration of the Dinkelbach algorithm, we  need to obtain $\text{argmin}_{\|\boldsymbol{\Delta}_{k,n,f}\|\le e_{k,n,f}} F(\boldsymbol{\Delta}_{k,n,f},q)$. We adopt the Lagrangian method. To this end, we define the Lagrangian function as follows:
\begin{align}
\label{Lag}
    \mathcal{L}(\boldsymbol{\Delta}_{k,n,f})=F(\boldsymbol{\Delta}_{k,n,f},\nu)+\Omega (\|\boldsymbol{\Delta}_{k,n,f}\|- e_{k,n,f}),
\end{align}
where $\Omega$ is the Lagrangian multiplier. 
Taking the derivations of \eqref{Lag} with respect to $\boldsymbol{\Delta}_{k,n,f}$ and $\Omega$, and setting their values to zero, we obtain $\boldsymbol{\Delta}_{k,n,f}^{*}$ as follows:
\begin{align}
    \label{Delta_Max}
    \boldsymbol{\Delta}_{k,n,f}^{*}=
    \text{argmin}_{\mathcal{L}(\boldsymbol{\Delta}_{k,n,f})}.
\end{align}
The update procedure in Dinkelbach algorithm is proceed until the convergence condition is met. We call this method as exact robust solution (ExRS). It is worthwhile mentioning that, since the objective function in \eqref{Frac} is pseudo-linear, we can solve the max (best-case) using the same approach.
\end{remark}
   }
By substituting \eqref{min_error} and \eqref{max_error} into the data rate \eqref{Rate_Error}, we obtain \eqref{Tota_DR_24},
\begin{figure*}
\begin{align}
	\label{Tota_DR_24}
	\tilde{r}_{k,n}^{f}=\log_2\bigg(1+\frac{\rho_{k,n}^{f}\big(\text{Tr}[\tilde{\bold{H}}_{k,n,f}\bold{W}_{k,n,f}]-e_{k,n}^{f}{\|\bold{W}_{k,n,f}\|}\big)}{\sum_{f'\in\mathcal{F}}\sum_{i\in \mathcal{K}\backslash \{k\}}\rho_{i,n}^{f'}(1-\xi_{i,n}^{k})\big\{\text{Tr}[\tilde{\bold{H}}_{k,n,f'}\bold{W}_{i,n,f'}]+e_{k,n}^{f'}{\|\bold{W}_{i,n,f'}\|}\big\}+\sigma_{k,n,f}^2}\bigg). 
\end{align}
\hrule
\end{figure*}
where $ \tilde{r}_{k,n}^{f}$ is the lower bound on the worst-case data rate. Therefore, the total data rate for the worst-case is given by 
\begin{align}\label{Tota_DR}
R_{\text{Total}}^{\text{Worst}}=\sum_{f \in \mathcal{F}}\sum_{ k\in \mathcal{K}}\sum_{ n \in \mathcal{N}}\tilde{r}_{k,n}^{f}.
\end{align}

Moreover, we restate the constraint \eqref{SIC_Ordeing} by using Proposition \ref{Pro_SBRS} as \eqref{SIC_C} shown at top of the next page which may not be tight.
\begin{figure*}
\textcolor{blue}{
	\begin{align}\label{SIC_C}
& \log_2\bigg(1+\frac{\xi_{i,n}^{k}\rho_{i,n}^{f'}\big\{\text{Tr}[\tilde{\bold{H}}_{i,n,f'}\bold{W}_{i,n,f'}]+e_{i,n}^{f'}{\|\bold{W}_{i,n,f'}\|}\big\}}{\sum_{f\in\mathcal{F}}\sum_{k'\in \mathcal{K}\backslash \{i\}}\rho_{k',n}^{f}(1-\xi_{k',n}^{i})\big\{\text{Tr}[\tilde{\bold{H}}_{i,n,f}\bold{W}_{k',n,f}]-e_{i,n}^{f}{\|\bold{W}_{k',n,f}\|}\big\}+\sigma_{i,n,f'}^2}\bigg)
\\&\nonumber
\le
 \log_2\bigg(1+\frac{\xi_{i,n}^{k}\rho_{i,n}^{f'}\big\{\text{Tr}[\tilde{\bold{H}}_{k,n,f'}\bold{W}_{i,n,f'}]-e_{k,n}^{f'}{\|\bold{W}_{i,n,f'}\|}\big\}}{\sum_{f\in\mathcal{F}}\sum_{k'\in \mathcal{K}\backslash \{i\}}\rho_{k',n}^{f}(1-\xi_{k',n}^{i})\{\text{Tr}[\tilde{\bold{H}}_{k,n,f}\bold{W}_{k',n,f}]+e_{k,n}^{f}{\|\bold{W}_{k',n,f}\|}\}+\sigma_{k,n,f}^2}\bigg).
\end{align} 
}
\hrule
\end{figure*}
In \eqref{SIC_C}, the left hand side is obtained for the maximum while the right hand side is obtained for the minimum. 
	 We also note that the term $\{\xi_{i,n}^{k}\cdot\rho_{i,n}^{f'}\}$ in \eqref{Tota_DR_24} is non-convex. To tackle this issue,~we define a new variable as the multiplication of two binary variables $x_{i,n}^{k,f'}=\xi_{i,n}^{k}\cdot\rho_{i,n}^{f'}$, which indicates the joint SIC ordering and scheduling variables. 
	 \textcolor{blue}{Then, we adopt a linearizion technique and add the following constraints to the optimization problem \cite{Linear19,Linear2}:
	\begin{align}
	 \text{C}_{7}:~&\xi_{i,n}^{k} \geq x_{i,n}^{k,f},~\forall k,i\in\mathcal{K}, n\in\mathcal{N}, f\in\mathcal{F},\\
	 \text{C}_{8}:~&\rho_{i,n}^{f} \geq x_{i,n}^{k,f},~\forall k,i\in\mathcal{K}, n\in\mathcal{N},f\in\mathcal{F},\\
	 \text{C}_{9}:~& \rho_{i,n}^{f}+\xi_{i,n}^{k}-1 \leq x_{i,n}^{k,f},\forall k,i\in\mathcal{K},n\in\mathcal{N},f\in\mathcal{F}.
\end{align}
}
	Furthermore, in order to handle the non-convex integer variables in problem \eqref{sic_order_main},~we rewrite the constraint \eqref{Reeq8i1} as the intersection of the following regions:
	\begin{align}\small
\nonumber	 &
	 \text{C}_{10}: 0 \leq\xi_{i,n}^{k}\leq 1,~\text{C}_{11}: \sum_{ i,k\in \mathcal{K}}\sum_{ n \in \mathcal{N}}\bigg( \xi_{i,n}^{k}-(\xi_{i,n}^{k})^{2}\bigg) \leq 0,
	\nonumber  \\     
	& \text{C}_{12}: 0 \leq\rho_{k,n}^{f}\leq 1,~
	 \nonumber\\&\text{C}_{13}:\sum_{f\in \mathcal{F}}\sum_{ k\in \mathcal{K}}\sum_{ n \in \mathcal{N}} \bigg( \rho_{k,n}^{f}-(\rho_{k,n}^{f})^{2}\bigg)  \leq 0,
	\nonumber \\
	& \text{C}_{14}:  0 \leq x_{i,n}^{k,f}\leq 1,
	\nonumber\\&\text{C}_{15}:\sum_{f\in \mathcal{F}}\sum_{ i,k\in \mathcal{K}}\sum_{ n \in \mathcal{N}}  \bigg(x_{i,n}^{k,f}-(x_{i,n}^{k,f})^{2}\bigg) \leq 0.
	\nonumber
\end{align} 
\textcolor{blue}{Now,~we rewrite the optimization problem as:
	 	\begin{subequations}\label{sic_order_change}
	 	\begin{align}\label{Reeq23}
	 	\max_{\bold{W},\boldsymbol{\xi},\boldsymbol{\rho}}\;&~\eta_{\text{EE}}^{\text{Worst}}
	\\\label{power_cons.11}
	 	\text{s.t.}~
	 	&\text{C}_{1}:\sum_{k\in \mathcal{K}}\sum_{n\in \mathcal{K}}\text{Tr}[\rho_{k,n}^{f}\bold{W}_{k,n,f}]\leq P^{f}_{\text{max} },\\
 	&\text{C}_{2}:\sum_{ k\in \mathcal{K}}\rho_{k,n}^{f}\le L_{n}^{f},\\
 	&\text{C}_{3}:\bold{W}_{k,n,f}\succeq \bold{0},\\
 & \text{C}_{4}:\text{Rank}(\bold{W}_{k,n,f})\leq 1,\\
 &	
	 	\text{C}_{7}-\text{C}_{15},
	 	\\&
	 	\eqref{User_Assosiaction_2}, \eqref{SIC_Sub_Con}, \eqref{SIC_Con_1}, \eqref{Fronthaul_Con}, \eqref{SIC_C}.
	 	\end{align}
 \end{subequations}
It can be concluded from the optimization problem (\ref{sic_order_change}),~the product term of $\rho_{k,n}^{f}\bold{W}_{k,n,f}$ is an obstacle for solving the optimization problem.~Let us define two new auxiliary variables as follows:
\begin{align}\nonumber
   \tilde{\bold{W}}_{k,n,f}\triangleq\rho_{k,n}^{f}\bold{W}_{k,n,f},~~
{\bold{W'}}_{i,k,n,f}\triangleq x_{i,n}^{k,f}\bold{W}_{i,n,f}.
\end{align}
Also, we employ the big-M method \cite{Big,FNOMA} to circumvent this difficulty.~In particular, we impose the following additional constraints to make it convex as follows:
\begin{align}
\text{C}_{16}:~&\tilde{\bold{W}}_{k,n,f}\preceq  P_{\text{max}}^{f} \bold{I}_{M}\rho_{k,n}^{f},\\
\text{C}_{17}:~&\tilde{\bold{W}}_{k,n,f}\preceq \bold{W}_{k,n,f},~\text{C}_{18}:~\tilde{\bold{W}}_{k,n,f}\succeq \bold{0},\\
\text{C}_{19}:~&\tilde{\bold{W}}_{k,n,f}\succeq \bold{W}_{k,n,f}-(1-\rho_{k,n}^{f}) P_{\text{max} }^f \bold{I}_{M},\\
\text{C}_{20}:~&{\bold{W'}}_{i,k,n,f}\preceq  P_{\text{max}}^f \bold{I}_{M}~x_{i,n}^{k,f},\\
\text{C}_{21}:~&{\bold{W}'}_{i,k,n,f}\preceq \bold{W}_{k,n,f},~\text{C}_{22}:~{\bold{W}'}_{i,k,n,f}\succeq \bold{0},\\
\text{C}_{23}:~&{\bold{W}'}_{i,k,n,f}\succeq \bold{W}_{k,n,f}-(1-x_{i,n}^{k,f}) P_{\text{max} }^f \bold{I}_{M}.
\end{align}
}
 	\textcolor{blue}{
 	The worst-case data rate \eqref{Tota_DR} and constraint \eqref{SIC_C} are still non-convex. To handle these and facilitate the solution, \eqref{Tota_DR} can be rewritten as  
 \begin{align}\label{Worst_Rate_1}
R_{\text{Tota}}^{\text{Worst}}=\log_2 \prod_{f\in \mathcal{F}}\prod_{k\in \mathcal{K}}\prod_{n\in \mathcal{N}} \frac{\psi_{f,k,n}}{\phi_{f,k,n}},
 \end{align}
 where 
 \begin{align}
 \nonumber
 &\psi_{f,k,n}=\text{Tr}[\tilde{\bold{H}}_{k,n,f}\tilde{\bold{W}}_{k,n,f}]-e_{k,n}^{f}\big(\mathcal{A}(\tilde{\bold{W}}_{k,n,f})\big)+ \phi_{f,k,n},
\\ &
\nonumber
 \phi_{f,k,n}=
 \nonumber\\ &\sum_{f' \in \mathcal{F}}\sum_{i\in \mathcal{K}\backslash \{k\}}\text{Tr}[\tilde{\bold{H}}_{k,n,f'}\tilde{\bold{W}}_{i,n,f'}]
 -\text{Tr}[\tilde{\bold{H}}_{k,n,f'}{\bold{W}'}_{i,k,n,f'}]\nonumber
\\&+
 e_{k,n}^{f'}\Big(\mathcal{A}(\tilde{\bold{W}}_{i,n,f'})-\mathcal{B}(\bold{W}'_{i,k,n,f'})\Big)+\sigma^2_{k,n,f},
 \end{align}
 where we define
 \begin{align}
 \label{aux40}
 &\mathcal{A}(\tilde{\bold{W}}_{k,n,f})\triangleq\|\rho_{k,n}^{f}\bold{W}_{k,n,f}\|={\|\tilde{\bold{W}}_{k,n,f}\|},
 \\&\label{aux42}\mathcal{B}(\bold{W}'_{i,k,n,f'})\triangleq\|x_{i,n}^{k,f'}\bold{W}_{i,n,f'}\|=\|{\bold{W}'}_{i,k,n,f'}\|.
 \end{align}
 }
 However, the total data rate \eqref{Worst_Rate_1} is still non-convex. Let us define 
 \begin{align}\label{slack}
 	\psi_{f,k,n}\triangleq \exp ({a_{f,k,n}}),~\phi_{f,k,n}\triangleq \exp({b_{f,k,n}}),
 \end{align}
 where $a_{f,k,n}$ and $b_{f,k,n}$ are  slack  variables.~Moreover, the lower bound of slack variables have the following form \cite{DerrikIOT}
 \begin{align}\label{bound}
 \exp ({a_{f,k,n}})\geq \sigma^{2}_{k,n,f},~\exp({b_{f,k,n}})\geq\sigma^{2}_{k,n,f}.
 \end{align}
Now, by substituting (\ref{slack}) into the objective function, we can obtain
 \begin{align}
 \max_{\bold{W},\bold{W'},\tilde{\bold{W}},\boldsymbol{\xi},\boldsymbol{\rho},\bold{x}}\;&~ \frac{\log_2 \prod_{f}\prod_{k}\prod_{n} \frac{\psi_{f,k,n}}{\phi_{f,k,n}}}{ P_\text{Total}(\bold{W})}\nonumber\\=\max_{\bold{W},\bold{W'},\tilde{\bold{W}},\boldsymbol{\xi},\boldsymbol{\rho},\bold{x},\bold{a},\bold{b}}&\frac{\log_2 \prod_{f}\prod_{k}\prod_{n} \exp ({a_{f,k,n}}-b_{f,k,n})} { P_\text{Total}(\bold{W})}\nonumber\\\nonumber=\max_{\bold{W},\bold{W'},\tilde{\bold{W}},\boldsymbol{\xi},\boldsymbol{\rho},\bold{x},\bold{a},\bold{b}}&\frac{
 \sum \limits_{f\in\mathcal{F}}
\sum\limits_{ k\in \mathcal{K}}\sum \limits_{ n \in \mathcal{N}}({a_{f,k,n}}-b_{f,k,n})\log_2(e)}{ P_\text{Total}(\bold{W})},
 \end{align}
 where $\bold{a}=[a_{f,k,n}]$ and $\bold{b}=[b_{f,k,n}]$ are the collection vectors of slack variables.
 \textcolor{blue}{ Also, $P_{\text{Total}}(\bold{W})=\sum_{f\in \mathcal{F}}\sum_{k\in\mathcal{K}}\sum_{n\in\mathcal{N}}\frac{1}{\beta}\text{Tr}[\bold{W}_{k,n,f}]+P_{\text{Static}}$.
 For notation simplicity, let define $ \Xi\triangleq\left[\bold{W'},\tilde{\bold{W}},{\bold{W}},\boldsymbol{\xi},\boldsymbol{\rho},\bold{x},\bold{a},\bold{b}\right] $ as the collection of the optimization variables. 
 It is worthwhile to mention that by this 
 method, we can easily rewrite the non-convex constraint \eqref{Fronthaul_Con}  as follows:
 \begin{align}
 \label{Fronthualu1}
     \sum_{k\in\mathcal{K}}\sum_{n\in\mathcal{N}}\varphi_{f}(a_{f,k,n}-b_{f,k,n})\le R_{\max}^{f}, \forall f\in\mathcal{F}.
 \end{align}
 }
 Now, the optimization problem at hand can be mathematically formulated as
 \begin{subequations}\label{DC_Problem}
 	\begin{align}\label{Reeq28}
 	&\max_{\Xi}
 	\\&
 	\frac{\sum\limits_{f \in \mathcal{F}}\sum\limits_{ k\in \mathcal{K}}\sum\limits_{ n \in \mathcal{N}}({a_{f,k,n}}-b_{f,k,n})\log_2(e)
 	}
 	{ P_\text{Total}(\bold{W})}+h(\boldsymbol{\xi},\boldsymbol{\rho},\bold{x},\lambda)
	\\\label{PC}
	\text{s.t.}~~
	&\text{C}_{1}:\sum_{k\in \mathcal{K}}\sum_{n\in \mathcal{K}}\text{Tr}[\tilde{\bold{W}}_{k,n,f}]\leq P^{f}_{\text{max}},~ 
\\
	&\text{C}_{2}:\sum_{ k\in \mathcal{K}}\rho_{k,n}^{f}\le L_{n}^{f},\,
	\\ 	&\text{C}_{3}:\bold{W}_{k,n,f}\succeq \bold{0},\\
	& \text{C}_{4}:\text{Rank}(\bold{W}_{k,n,f})\leq 1,\\
	& \exp ({a_{f,k,n}})\geq \sigma^{2}_{k,n,f},~\exp({b_{f,k,n}})\geq\sigma^{2}_{k,n,f},\\
	&a_{f,k,n}\geq b_{f,k,n}\label{A_bound},\\
	\label{Phi_Bound}
	&\psi_{f,k,n}\geq \exp(a_{f,k,n}),~\phi_{f,k,n}\leq \exp(b_{f,k,n}),\\
 	& \text{C}_{7}-\text{C}_{10},\text{C}_{12},\text{C}_{14},\text{C}_{16}-\text{C}_{23},
 	\\& 
 	\eqref{User_Assosiaction_2}, \eqref{SIC_Sub_Con}, \eqref{SIC_Con_1}, \eqref{Fronthaul_Con}, \eqref{SIC_C},
	\end{align}
 \end{subequations}
 \textcolor{blue}{
   where $ h(\boldsymbol{\xi},\boldsymbol{\rho},\bold{x},\lambda)\triangleq	-f(\boldsymbol{\xi},\boldsymbol{\rho},\bold{x},\lambda)+g(\boldsymbol{\xi},\boldsymbol{\rho},\bold{x},\lambda) $ is the penalty function. Furthermore,
   $f(\boldsymbol{\xi},\boldsymbol{\rho},\bold{x},\lambda)$~and $g(\boldsymbol{\xi},\boldsymbol{\rho},\bold{x},\lambda)$ are defined as
   	\begin{align}
   	\tiny
  \nonumber &f(\boldsymbol{\xi},\boldsymbol{\rho},\bold{x},\lambda)\triangleq
  \nonumber \\&
   \lambda \Bigg(\sum_{ i,k\in \mathcal{K}}\sum_{ n \in \mathcal{N}} \xi_{i,n}^{k}+\sum_{f\in \mathcal{F}}\sum_{ k\in \mathcal{K}}\sum_{ n \in \mathcal{N}} \rho_{k,n}^{f}+\sum_{f\in \mathcal{F}}\sum_{ i,k\in \mathcal{K}}\sum_{ n \in \mathcal{N}} x_{i,n}^{k,f}\Bigg),
   \\
  &g(\boldsymbol{\xi},\boldsymbol{\rho},\bold{x},\lambda)\triangleq
   \lambda \Bigg(\sum_{ i,k\in \mathcal{K}}\sum_{ n \in \mathcal{N}} (\xi_{i,n}^{k})^{2}
  \nonumber\\&+\sum_{f\in \mathcal{F}}\sum_{ k\in \mathcal{K}}\sum_{ n \in \mathcal{N}} (\rho_{k,n}^{f})^{2}+\sum_{f\in \mathcal{F}}\sum_{ i,k\in \mathcal{K}}\sum_{ n \in \mathcal{N}} (x_{i,n}^{k,f})^{2}\Bigg),
   \end{align}
   	respectively.~It is worth mentioning that $\lambda$ indicates a penalty factor to penalize the objective function for any $\xi_{i,n}^{k}$,~$\rho_{k,n}^{f}$,~and $x_{k,n}^{f}$ which are not binary (i.e., their values are in $[0,1]$) \cite{Ata_TWC}. However, the precise $0$ and $1$  are not always available. In this case, we {round} their values to the nearest integer values. 
   The following proposition provides a mathematical analysis of the penalty factor. 
   	}
 \textcolor{blue}{
 	  \begin{proposition}\label{Pro_Binary}
   	For sufficiently large value of $\lambda$, we have~$\min_{\lambda}\max_{\bold{W},\boldsymbol{\xi},\boldsymbol{\rho},\bold{x}}\mathcal{L}(\bold{W},\boldsymbol{\xi},\boldsymbol{\rho},\bold{x})=\max_{\bold{W},\boldsymbol{\xi},\boldsymbol{\rho},\bold{x}}\min_{\lambda}\mathcal{L}(\bold{W},\boldsymbol{\xi},\boldsymbol{\rho},\bold{x})$\footnote{Note that $\mathcal{L}(\bold{W},\boldsymbol{\xi},\boldsymbol{\rho},\bold{x})$ is the objective function in (\ref{sic_order_change}).}.~In other words, the optimization problem \eqref{sic_order_change} is equivalent to \eqref{DC_Problem} where both problems obtain the same values.
   \end{proposition}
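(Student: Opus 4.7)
The plan is to establish the saddle identity by showing that both $\min_\lambda\max_\Xi\mathcal{L}$ and $\max_\Xi\min_\lambda\mathcal{L}$ equal the optimal value $V^{\star}$ of \eqref{sic_order_change}. The crucial structural observation I would exploit is that for every $y\in[0,1]$ one has $y^{2}-y\le 0$, with equality if and only if $y\in\{0,1\}$. Applying this componentwise to $\xi_{i,n}^{k}$, $\rho_{k,n}^{f}$, and $x_{i,n}^{k,f}$ shows that $h(\boldsymbol{\xi},\boldsymbol{\rho},\bold{x},\lambda)\le 0$ for every $\lambda\ge 0$, with equality exactly on the binary feasible set.

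For the maximin direction I would fix a feasible $\Xi$ and evaluate the inner infimum $\min_{\lambda\ge 0}\bigl[\mathcal{L}_{0}(\Xi)+h(\Xi,\lambda)\bigr]$, where $\mathcal{L}_{0}$ denotes the unpenalized EE objective. When $\Xi$ is binary the penalty vanishes and the infimum equals $\mathcal{L}_{0}(\Xi)$; when any coordinate is fractional the penalty is strictly negative, and driving $\lambda\to\infty$ sends the infimum to $-\infty$. The outer maximization therefore retains only binary $\Xi$, which yields $\max_\Xi\min_\lambda\mathcal{L}=V^{\star}$.

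For the minimax direction, weak duality already provides $\min_\lambda\max_\Xi\mathcal{L}\ge V^{\star}$. The reverse bound is the exact-penalization step: the remaining feasible set of \eqref{DC_Problem} is compact, because $\bold{W}_{k,n,f}$ is bounded via $\text{C}_{1}$ together with the big-$M$ constraints $\text{C}_{16}$--$\text{C}_{23}$, while all penalized coordinates lie in $[0,1]$, and $\mathcal{L}_{0}$ is continuous on it. Consequently $\max_\Xi[\mathcal{L}_{0}+h(\cdot,\lambda)]$ is monotonically non-increasing in $\lambda$ and tends to $V^{\star}$ as $\lambda\to\infty$; by a standard Clarke-type exact-penalty argument there is a finite $\lambda^{\star}$ such that every maximizer for $\lambda\ge\lambda^{\star}$ is binary feasible, so $\max_\Xi\mathcal{L}(\Xi,\lambda^{\star})=V^{\star}$ and hence $\min_\lambda\max_\Xi\mathcal{L}\le V^{\star}$. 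Combining the two inequalities yields the saddle identity and the equivalence of the optimal values of \eqref{sic_order_change} and \eqref{DC_Problem}.

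The main obstacle I anticipate is the finite-$\lambda^{\star}$ step, because the non-convex SIC constraint \eqref{SIC_C} and the exponential slack constraints \eqref{Phi_Bound} couple the penalized binary variables to the beamforming and slack variables in a non-smooth way near the boundary. To get around this I would first verify compactness and apply Berge's maximum theorem to obtain continuity and monotonicity of the max-value function in $\lambda$, and then use the fact that the set of fractional feasible points is closed and separated from the set of binary optimizers in order to extract a quantitative threshold on $\lambda^{\star}$.
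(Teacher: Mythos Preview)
Your proposal is correct and follows essentially the same abstract-Lagrangian route as the paper: establish weak duality, show that the inner minimization over $\lambda$ drives the value to $-\infty$ whenever any relaxed variable is fractional (so the maximin equals the binary optimum $V^\star$), and then argue that for sufficiently large $\lambda$ the outer maximizer of $\mathcal{L}$ must be binary, closing the gap. The paper's proof is organized as a two-case argument (either $\mathcal{R}_1=\mathcal{R}_2=\mathcal{R}_3=0$ at the dual optimum, or a contradiction with boundedness from below), which is exactly your dichotomy expressed differently.

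Where you go further than the paper is in worrying about the existence of a \emph{finite} threshold $\lambda^\star$ and proposing compactness plus Berge's theorem to secure it; the paper simply asserts monotonicity of $\mu(\lambda)$ and the contradiction in Case~2 without quantifying $\lambda^\star$. Your added care is warranted but not strictly required to match the paper's level of rigor, so you may safely streamline that part if you wish.
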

\begin{proof}
	Please see Appendix B.
\end{proof}
   Problem \eqref{DC_Problem} is non-convex due to the constraints \eqref{Phi_Bound} and \eqref{SIC_C} and $ g(\boldsymbol{\xi},\boldsymbol{\rho},\bold{x},\lambda) $ in the objective function. 
    In order to convert it into a convex one, we employ MM approach where a surrogate function is approximated by the first  order Taylor approximation.~Therefore,~we use the following inequalities:
   \begin{align}\label{g_small}
   \nonumber
   g(\boldsymbol{\xi},\boldsymbol{\rho},\bold{x},\lambda)&\leq g(\boldsymbol{\xi}^{t-1},\boldsymbol{\rho}^{t-1},\bold{x}^{t-1},\lambda)\\&+\nabla_{\boldsymbol{\xi}}g^{T}(\boldsymbol{\xi}^{t-1},\boldsymbol{\rho}^{t-1},\bold{x}^{t-1},\lambda)(\boldsymbol{\xi}-\boldsymbol{\xi}^{t-1})\nonumber\\&+\nabla_{\boldsymbol{\rho}}g^{T}(\boldsymbol{\xi}^{t-1},\boldsymbol{\rho}^{t-1},\bold{x}^{t-1},\lambda)(\boldsymbol{\rho}-\boldsymbol{\rho}^{t-1})\nonumber\\&+\nabla_{\bold{x}}g^{T}(\boldsymbol{\xi}^{t-1},\boldsymbol{\rho}^{t-1},\bold{x}^{t-1},\lambda)(\bold{x}-\bold{x}^{t-1})\nonumber\\&\triangleq\tilde{g}(\boldsymbol{\xi},\boldsymbol{\rho},\bold{x},\lambda),
   \end{align}
   }
  \textcolor{blue}{
  	  \begin{align}
     &\mathcal{A}(\tilde{\bold{W}}_{k,n,f})\leq \mathcal{A}(\tilde{\bold{W}}_{k,n,f})^{t-1}\nonumber\\&+\text{Tr}\Big(\nabla_{\tilde{\bold{W}}_{k,n,f}}\mathcal{A}(\tilde{\bold{W}}_{k,n,f})^{t-1}
   [\mathcal{A}(\tilde{\bold{W}}_{k,n,f})^{t}-\mathcal{A}(\tilde{\bold{W}}_{k,n,f})^{t-1}]\Big)
 \nonumber  \\&\label{A_Approx}\triangleq 
   \tilde{\mathcal{A}}(\tilde{\bold{W}}_{k,n,f}),\\
   & \mathcal{B}(\bold{W}'_{i,k,n,f'})\leq  \mathcal{B}(\bold{W}'_{i,k,n,f'})^{t-1}\nonumber\\&+\text{Tr}\Big(\nabla_{\bold{W}'_{i,k,n,f'}} \mathcal{B}(\bold{W}'_{i,k,n,f'})^{t-1}\big[ \mathcal{B}(\bold{W}'_{i,k,n,f'})^{t}\nonumber\\&- \mathcal{B}(\bold{W}'_{i,k,n,f'})^{t-1}\big]\Big)\triangleq \tilde{ \mathcal{B}}(\bold{W}'_{i,k,n,f'}),\label{C_approx}
   \\\label{b_approx}
  & \phi_{f,k,n}\leq \exp (b^{t-1}_{f,k,n})(b_{f,k,n}^t-b^{t-1}_{f,k,n}+1)\triangleq\tilde{\phi}_{f,k,n}.
   \end{align}
}
   It should be noted that the right hand sides of (\ref{g_small})-(\ref{C_approx}) are affine functions.
   \textcolor{blue}{Now the main challenge in problem \eqref{DC_Problem} is the non-convex constraint \eqref{SIC_C}, i.e.,  SIC ordering constraint. Next, we handle this constraint similar to the objective function.
   }
\textcolor{blue}{
   To this end, first, we rewrite  \eqref{SIC_C} as \eqref{SIC_C_u} shown at the top of the next page, by replacing the  previously defined auxiliary variables $x_{i,n}^{k,f'}=\xi_{i,n}^{k}\cdot\rho_{i,n}^{f'}$, $\tilde{\bold{W}}_{k,n,f}=\rho_{k,n}^{f}\bold{W}_{k,n,f},$
${\bold{W'}}_{i,k,n,f}=x_{i,n}^{k,f}\bold{W}_{i,n,f}$, \eqref{aux40}, and \eqref{aux42}.
\begin{figure*}
\textcolor{blue}{
	\begin{align}\label{SIC_C_u}
& \log_2\bigg(1+
\frac{\text{Tr}[\tilde{\bold{H}}_{i,n,f'}{\bold{W'}}_{i,k,n,f'}]+e_{i,n}^{f'}{\mathcal{B}({\bold{W'}}_{i,k,n,f'})}}{\sum_{f\in\mathcal{F}}\sum_{k'\in \mathcal{K}\backslash \{i\}}
\Big\{\text{Tr}[\tilde{\bold{H}}_{i,n,f}\tilde{\bold{W}}_{k',n,f}]-e_{i,n}^{f}{\mathcal{A}(\tilde{\bold{W}}_{k',n,f})}
-\big\{\text{Tr}[\tilde{\bold{H}}_{i,n,f}\bold{W'}_{i,k',n,f}]-e_{i,n}^{f}\mathcal{B}(\bold{W'}_{i,k',n,f})\big\}
\Big\}
+\sigma_{i,n,f'}^2}
\bigg)
\\&\nonumber
\le
 \log_2\bigg(1+
 \frac{\text{Tr}[\tilde{\bold{H}}_{k,n,f'}\bold{W'}_{i,k,n,f'}]-e_{k,n}^{f'}{\mathcal{B}({\bold{W'}}_{i,k,n,f'})}}
 {{\sum_{f\in\mathcal{F}}\sum_{k'\in \mathcal{K}\backslash \{i\}}
 \Big\{\text{Tr}[\tilde{\bold{H}}_{k,n,f}\tilde{\bold{W}}_{k',n,f}]+e_{i,n}^{f}{\mathcal{A}(\tilde{\bold{W}}_{k',n,f})}
 -\big\{\text{Tr}[\tilde{\bold{H}}_{i,n,f}\bold{W'}_{i,k',n,f}]+e_{i,n}^{f}\mathcal{B}(\bold{W'}_{i,k',n,f})
 \big\}
 \Big\}+\sigma_{k,n,f}^2}}
 \bigg).
\end{align} 
}
\hrule
\end{figure*}
Now, we can rewrite \eqref{SIC_C_u} as follows:
\begin{align}
\label{SIC_Log}
    \log_2(\frac{\mathcal{D}_{i,k,n,f'}}{\mathcal{E}_{i,n,f'}})-\log_2(\frac{\mathcal{F}_{i,k,n,f'}}{\mathcal{G}_{i,k,n,f'}})\le 0,
\end{align}
  where
\begin{align}
\nonumber
 &   \mathcal{E}_{i,n,f'}=\sum_{f\in\mathcal{F}}\sum_{k'\in \mathcal{K}\backslash \{i\}}
\Big\{\text{Tr}[\tilde{\bold{H}}_{i,n,f}\tilde{\bold{W}}_{k',n,f}]-e_{i,n}^{f}{\mathcal{A}(\tilde{\bold{W}}_{k',n,f})}
\\&
-\big\{\text{Tr}[\tilde{\bold{H}}_{i,n,f}\bold{W'}_{i,k',n,f}]-e_{i,n}^{f}\mathcal{B}(\bold{W'}_{i,k',n,f})\big\}
\Big\}
+\sigma_{i,n,f'}^2,\label{SICE}
\\&\nonumber
\mathcal{D}_{i,k,n,f'}=
\\&\text{Tr}[\tilde{\bold{H}}_{i,n,f'}{\bold{W'}}_{i,k,n,f'}]+e_{i,n}^{f'}{\mathcal{B}({\bold{W'}}_{i,k,n,f'})}+E_{i,n,f'},
\\\nonumber&
\label{SICG}
\mathcal{G}_{i,k,n,f'}=\sum_{f\in\mathcal{F}}\sum_{k'\in \mathcal{K}\backslash \{i\}}
\text{Tr}[\tilde{\bold{H}}_{k,n,f}\tilde{\bold{W}}_{k',n,f}]+e_{i,n}^{f}{\mathcal{A}(\tilde{\bold{W}}_{k',n,f})}
\\&
 -\big\{\text{Tr}[\tilde{\bold{H}}_{i,n,f}\bold{W'}_{i,k',n,f}]+e_{i,n}^{f}\mathcal{B}(\bold{W'}_{i,k',n,f})
 \big\}
 +\sigma_{k,n,f}^2,
\\\nonumber
&\mathcal{F}_{i,k,n,f'}=
\nonumber\\&\mathcal{G}_{i,k,n,f'}+\text{Tr}[\tilde{\bold{H}}_{k,n,f'}\bold{W'}_{i,k,n,f'}]-e_{k,n}^{f'}{\mathcal{B}({\bold{W'}}_{i,k,n,f'})}.
\end{align}
Hereafter, we drop the subscripts of $\mathcal{E}_{i,n,f'},~\mathcal{D}_{i,k,n,f'},~\mathcal{G}_{i,k,n,f'},~\text{and}~\mathcal{F}_{i,k,n,f'}$ for the notation simplicity and denote  them by
$\mathcal{E}, ~\mathcal{D},~ \mathcal{G}$,~and $\mathcal{F}$, respectively.
 After this, similar to \eqref{slack}, we consider the following definitions:
\begin{align}
\label{auxsic}
\mathcal{D}\triangleq\exp(d), \mathcal{E}\triangleq\exp(e), \mathcal{F}\triangleq\exp(f), \mathcal{G}\triangleq\exp(g),
\end{align}
where $d,~e,~f$, and $g$ are the auxiliary variables.
Substituting \eqref{auxsic} into \eqref{SIC_Log}, we obtain
\begin{align}
\begin{split}
    0&\ge \log_2(\frac{\mathcal{D}}{\mathcal{E}})-\log_2(\frac{\mathcal{F}}{\mathcal{G}}),
    \\
    &=\log_2(\exp(d-e))-\log_2(\exp(f-g)),
    \\&=\log_2e\times(d-e-(f-g)),
\end{split}
\end{align}
which has a linear form. For these auxiliary variables, we have the following constraints:
\begin{align}
\label{SICConu}
\begin{split}
     &d\ge e, ~~~f\ge g,
  \\
  & \mathcal{D}\ge\exp(d), \mathcal{E}\le\exp(e), \mathcal{F}\ge\exp(f), \mathcal{G}\le\exp(g).  
\end{split}
\end{align}
Constraint \eqref{SICConu} is non-convex due to the form of $\mathcal{E}$,~$\mathcal{G}$, and constraints $\mathcal{E}\le\exp(e)$ and $\mathcal{G}\le\exp(g)$.
However, by replacing $\mathcal{A}(.)$ and $\mathcal{B}(.)$ with their linear approximations defined in \eqref{A_Approx} and \eqref{C_approx}, $\mathcal{E}$ and $\mathcal{G}$ would become affine functions denoted by $\tilde{\mathcal{E}}$ and $\tilde{\mathcal{G}}$, respectively. Finally, by employing a Taylor series expansion of $\exp(e)$ and $\exp(g)$, the constraints $\mathcal{E}\le\exp(e)$ and $\mathcal{G}\le\exp(g)$ can be transformed by
\begin{align}
\label{SIC-eg}
\begin{split}
        & \tilde{\mathcal{E}}\le\exp(\tilde{e})(e-\tilde{e}+1),
    \\& \tilde{\mathcal{G}}\le\exp(\tilde{g})(g-\tilde{g}+1),
\end{split}
\end{align}
where $\tilde{e}$ and $\tilde{g}$ are the feasible points. By considering these, the convex form of \eqref{SIC_C} is given by the following constraint:
\begin{align}\label{SIC_Con}
    \begin{split}
             &
             d\ge e,~ f\ge g,~ \mathcal{D}\ge\exp(d),~ \eqref{SIC-eg}.
    \end{split}
\end{align}
Recall that for notation simplicity, we removed the subscripts  of $d, e, f$, and $g$.
}
\textcolor{blue}{As for the final step, we have to tackle non-convex constraint \eqref{SIC_Sub_Con}.~To this end,~we first handle the right hand side of \eqref{SIC_Sub_Con}, i.e., $\xi_{k,n}^{i}\le \sum_{f\in\mathcal{F}}\rho_{k,n}^{f}\cdot\sum_{f'\in\mathcal{F}}\rho_{i,n}^{f'}, k\neq i$, by using the MM approach to make a convex form that is inspired from \cite{Ata_WCL}.~It
   is straight forward to show that $z_1z_2=\frac{1}{2}[(z_1+z_2)^{2}-(z_1)^2-(z_2)^2]$, where we can define $ z_1=z_{k,n}\triangleq\sum_{f\in\mathcal{F}}\rho_{k,n}^{f}, $ and $z_2=y_{i,n}\triangleq\sum_{f'\in\mathcal{F}}\rho_{i,n}^{f'} $ to transform \eqref{SIC_Sub_Con} into a convex one. 
%
   Now, we rewrite the constraint in \eqref{SIC_Sub_Con} as follows:
   \begin{align}
   \nonumber
       2\xi_{k,n}^{i}\le \big(z_{k,n}+y_{i,n}\big)^2-\bigg((z_{k,n})^2+(y_{i,n})^2\bigg),\forall k,i\in\mathcal{K}, k\neq i.
   \end{align}
   The above constraint is non-convex.
   Similarly, we adopt Taylor approximation for $Q_{k,n}^{i}\triangleq (z_{k,n})^2+(y_{i,n})^2$ to obtain a convex constraint.
   Therefore, after employing Taylor approximation this constraint can be written as follows:
   \begin{align}\label{7final}
         2\xi_{k,n}^{i}\le \bigg(z_{k,n}+y_{i,n}\bigg)^2-\tilde{Q}_{k,n}^{i},\forall k,i\in\mathcal{K}, k\neq i,
   \end{align}
   where  $\tilde{Q}_{k,n}^{i}$ is the first order Taylor approximation for ${Q}_{k,n}^{i}$. 
    }    		We further~use the following theorem which is related to the nonlinear fractional programming.
    \textcolor{blue}{	 \begin{theorem}
    	 		A generalized fractional problem is defined as:
    	 	\begin{align}
    	 	\label{Dinki}
    	 	\max_{\mathbf{x}}~~\min_{k=1,...,K} \frac{f_{k}(\mathbf{x})}{g_{k}(\mathbf{x})},~~~\text{s.t.}~~\mathbf{x} \in \mathcal{D}.
    	 	\end{align}
    	 	 \end{theorem}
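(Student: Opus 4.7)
The final statement, as displayed, is not actually a claim in need of proof: the preamble declares \texttt{\textbackslash newtheorem\{theorem\}\{Definition\}}, so the \texttt{theorem} environment prints as ``Definition'' and its content merely introduces the generic max--min fractional program
\[
\max_{\mathbf{x}\in\mathcal{D}}~\min_{k=1,\dots,K}\frac{f_{k}(\mathbf{x})}{g_{k}(\mathbf{x})},
\]
without attaching any predicate to it. Strictly speaking, there is nothing to verify; a proposal in the usual sense is vacuous, and my first recommendation would be to relabel the environment as a definition or to extend the statement to include the Dinkelbach-type equivalence that the authors are clearly about to invoke.

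If I instead infer the intended target from context --- the authors used a Dinkelbach-style parametric update on $\nu$ in Remark~\ref{EXRS}, and the same machinery now has to be applied to the generalised fractional objective they have just constructed --- then the plan is to prove the Crouzeix--Ferland equivalence: under continuity of $f_{k},g_{k}$, compactness of $\mathcal{D}$, and $g_{k}>0$ on $\mathcal{D}$, the auxiliary function $F(\lambda)=\max_{\mathbf{x}\in\mathcal{D}}\min_{k}\bigl(f_{k}(\mathbf{x})-\lambda g_{k}(\mathbf{x})\bigr)$ admits a unique root $\lambda^{\star}$, that $\lambda^{\star}$ equals the optimal value of \eqref{Dinki}, and that the root-finding optimiser also maximises \eqref{Dinki}. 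I would organise this in three steps: attainment of the outer maximum via a Weierstrass argument on the compact $\mathcal{D}$; continuity and strict monotonicity of $F$, obtained from the fact that each $f_{k}(\mathbf{x})-\lambda g_{k}(\mathbf{x})$ is strictly decreasing affine in $\lambda$ with $g_{k}>0$ and that taking pointwise minima and outer suprema preserves these properties; and the root characterisation $F(\lambda)\gtreqless 0\iff \min_{k} f_{k}(\mathbf{x})/g_{k}(\mathbf{x})\gtreqless\lambda$ for some $\mathbf{x}$, which after taking the supremum over $\mathcal{D}$ yields $\lambda^{\star}=\max_{\mathbf{x}}\min_{k}f_{k}(\mathbf{x})/g_{k}(\mathbf{x})$.

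The main obstacle in that follow-up would not be the classical analysis, which is standard, but matching its hypotheses to the domain actually produced in the paper: after the MM linearisations, penalty reformulation, big-$M$ constraints and SDP relaxation, the effective feasible set on each inner iteration is parameterised by the previous iterate, and establishing uniform compactness together with strict positivity of the denominator $P_{\text{Total}}(\mathbf{W})$ across iterations is the substantive verification. For the statement as literally displayed, however, none of that machinery is exercised and no formal proof is required.
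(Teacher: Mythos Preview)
Your reading is correct: the \texttt{theorem} environment is declared via \texttt{\textbackslash newtheorem\{theorem\}\{Definition\}}, so the displayed block is a \emph{definition} of the generalized fractional program and carries no assertion to prove; the paper accordingly supplies no proof and instead places the Dinkelbach/Crouzeix--Ferland equivalence in the immediately following Proposition, which it simply attributes to \cite{Dr. Eduar}. Your contextual sketch of that equivalence is a reasonable bonus, but for the statement as written there is nothing further required, and the paper treats it the same way.
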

    \begin{proposition}\label{DinkiPro}
    	 	 The optimal vector $\mathbf{x}^{\ast}$ for solution of \eqref{Dinki} can be achieved if and only if \cite{Dr. Eduar} 
    	 	 \begin{equation}
    	 	 \mathbf{x}^{\ast}=\text{argmax}_{\mathbf{x}\in{\mathcal{D}}}\bigg\{\min_{k=1,...,K}[f_{k}(\mathbf{x})-\alpha^{\ast}g_{k}(\mathbf{x})]\bigg\},
    	 	 \end{equation}
    	 	 where $\alpha^{\ast}$ is obtained via the following problem:
    	 	 \begin{equation}
    	 	 	F(\alpha)=\max_{\mathbf{x} \in \mathcal{D}}\min_{k,1,...,K}[{f_{k}(\mathbf{x})-\alpha g_{k}(\mathbf{x})]=0}.
    	 	 \end{equation}
    	 	 \end{proposition}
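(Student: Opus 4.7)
The plan is to reduce the generalized fractional program to the parametric family $F(\alpha)=\max_{\mathbf{x}\in\mathcal{D}}\min_{k}[f_{k}(\mathbf{x})-\alpha g_{k}(\mathbf{x})]$ appearing in the statement, and then show that the unique root $\alpha^{\ast}$ of $F$ coincides with the optimal value of the max-min ratio problem, with the outer argmax inheriting the optimal primal variable. The standing assumption I would invoke (natural in this paper's context, where the $g_{k}$ are interference-plus-noise denominators) is $g_{k}(\mathbf{x})>0$ for all $\mathbf{x}\in\mathcal{D}$, together with continuity of $f_{k},g_{k}$ and compactness of $\mathcal{D}$, so that all suprema are attained.

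First I would establish the key chain of equivalences linking $F$ and the target value $\alpha^{\star}\triangleq\max_{\mathbf{x}\in\mathcal{D}}\min_{k}f_{k}(\mathbf{x})/g_{k}(\mathbf{x})$. Since $g_{k}>0$, the inequality $\min_{k}[f_{k}(\mathbf{x})-\alpha g_{k}(\mathbf{x})]\geq 0$ is equivalent to $f_{k}(\mathbf{x})\geq \alpha g_{k}(\mathbf{x})$ for every $k$, i.e., $\min_{k}f_{k}(\mathbf{x})/g_{k}(\mathbf{x})\geq \alpha$. Taking maxima over $\mathbf{x}\in\mathcal{D}$ yields $F(\alpha)\geq 0 \iff \alpha^{\star}\geq \alpha$; the same manipulation with reversed inequalities gives $F(\alpha)\leq 0 \iff \alpha^{\star}\leq \alpha$. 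Next I would verify the structural properties of $F$: for each fixed $\mathbf{x}$, $\alpha\mapsto f_{k}(\mathbf{x})-\alpha g_{k}(\mathbf{x})$ is strictly decreasing and affine, so $\min_{k}[\cdot]$ is concave and strictly decreasing in $\alpha$; the upper envelope $F$ therefore remains strictly decreasing in $\alpha$, continuous (by Berge's maximum theorem), with $F(\alpha)\to+\infty$ as $\alpha\to-\infty$ and $F(\alpha)\to-\infty$ as $\alpha\to+\infty$. Hence $F$ has a unique root $\alpha^{\ast}$, and the equivalences force $\alpha^{\ast}=\alpha^{\star}$.

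Finally I would close the argmax characterization, giving both directions of the ``if and only if.'' If $\mathbf{x}^{\ast}\in\text{argmax}_{\mathbf{x}\in\mathcal{D}}\min_{k}[f_{k}(\mathbf{x})-\alpha^{\ast}g_{k}(\mathbf{x})]$, then by definition $\min_{k}[f_{k}(\mathbf{x}^{\ast})-\alpha^{\ast}g_{k}(\mathbf{x}^{\ast})]=F(\alpha^{\ast})=0$, which after division by the positive $g_{k}(\mathbf{x}^{\ast})$ gives $\min_{k}f_{k}(\mathbf{x}^{\ast})/g_{k}(\mathbf{x}^{\ast})=\alpha^{\ast}=\alpha^{\star}$, so $\mathbf{x}^{\ast}$ solves the original problem. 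Conversely, any optimal $\mathbf{x}^{\ast}$ for \eqref{Dinki} must satisfy $\min_{k}[f_{k}(\mathbf{x}^{\ast})-\alpha^{\star}g_{k}(\mathbf{x}^{\ast})]=0$, matching the value $F(\alpha^{\ast})$ and placing $\mathbf{x}^{\ast}$ in the argmax set. The main obstacle is not algebraic but regularity-theoretic: the strict monotonicity and continuity of $F$, on which uniqueness of $\alpha^{\ast}$ rests, require $g_{k}>0$ on $\mathcal{D}$ and enough compactness/continuity to invoke a maximum-theorem argument; in the context of problem \eqref{DC_Problem} both conditions are satisfied, so the application to the Dinkelbach-type update of $\alpha$ in the outer loop is justified.
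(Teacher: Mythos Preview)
Your argument is correct and follows the standard route for the generalized Dinkelbach/Crouzeix result: establish the sign equivalence $F(\alpha)\ge 0\iff \alpha\le\alpha^{\star}$ via positivity of the $g_k$, deduce strict monotonicity and continuity of $F$ so that it has a unique root equal to $\alpha^{\star}$, and then read off the two-way argmax characterization. The only point worth tightening is the sentence ``after division by the positive $g_{k}(\mathbf{x}^{\ast})$ gives $\min_{k}f_{k}(\mathbf{x}^{\ast})/g_{k}(\mathbf{x}^{\ast})=\alpha^{\ast}$'': from $\min_{k}[f_{k}(\mathbf{x}^{\ast})-\alpha^{\ast}g_{k}(\mathbf{x}^{\ast})]=0$ one gets $f_{k}(\mathbf{x}^{\ast})/g_{k}(\mathbf{x}^{\ast})\ge\alpha^{\ast}$ for every $k$ with equality for the minimizing index, whence the claim; the division is per index, not of the $\min$ as a whole.

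As for comparison with the paper: the paper does not supply its own proof of this proposition. It simply states the result and attributes it to the cited reference on fractional programming. Your write-up therefore goes beyond what the paper does by giving a self-contained justification, and the regularity hypotheses you invoke (compact $\mathcal{D}$, continuous $f_k,g_k$, strictly positive $g_k$) are exactly the standing assumptions under which the cited result is stated in the literature and are satisfied in the application to problem~\eqref{DC_Problem}.
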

    	 	 }
    	 	\textcolor{blue}{ By using Proposition \ref{DinkiPro},
    	 	  the optimal value of the EE  is given by
    	 	  \begin{align}
    	   \nonumber q^{\ast}\triangleq\frac{\log_2(e)\sum_{f \in \mathcal{F}}\sum_{ k\in \mathcal{K}}\sum_{ n \in \mathcal{N}}({a_{f,k,n}^{*}}-b_{f,k,n}^{*})}{P_\text{Total}(\bold{W}^{\ast})}=\\\max_{\Xi}
    	 	  \frac{\log_2(e)\sum_{f \in \mathcal{F}}\sum_{ k\in \mathcal{K}}\sum_{ n \in \mathcal{N}}({a_{f,k,n}}-b_{f,k,n})}{P_\text{Total}(\bold{W})}.
    	 	 	  \end{align}
    	 	  As a result,~the maximum EE, $q^{*}$ can be achieved if and only if
    	\begin{align}
 &  \nonumber \max_{\Xi}~
     \log_2(e)\sum_{f \in \mathcal{F}}\sum_{ k\in \mathcal{K}}\sum_{ n \in \mathcal{N}}({a_{f,k,n}}-b_{f,k,n})-q^{\ast}P_\text{Total}(\bold{W})\nonumber\\&=\log_2(e)\sum_{f \in \mathcal{F}}\sum_{ k\in \mathcal{K}}\sum_{ n \in \mathcal{N}}({a_{f,k,n}^{*}}-b_{f,k,n}^{*}) 
     -q^{\ast}P_{\text{Total}}(\bold{W}^{\ast})
   =0.
    	 	\end{align}
    	 	}
\indent
    Based on the previous steps and defining $\Xi'\triangleq[\Xi,d,e,f,g]$,~we solve the following problem instead of dealing with fractional programming in \eqref{DC_Problem}
   \begin{subequations}\label{DC_Final}
   	\textcolor{blue}{
	\begin{align}
 	\nonumber	\max_{\Xi'
 	}&\sum_{f \in \mathcal{F}}\sum_{ k\in \mathcal{K}}\sum_{ n \in \mathcal{N}}({a_{f,k,n}}-b_{f,k,n})
 \\&-q\cdot P_\text{Total}(\bold{W})-f(\boldsymbol{\xi},\boldsymbol{\rho},\bold{x},\lambda)+\tilde{g}(\boldsymbol{\xi},\boldsymbol{\rho},\bold{x},\lambda)
   	\\\nonumber
   	\text{s.t.}
   	\\&\psi_{f,k,n}\geq \exp(a_{f,k,n}),
 \\&\text{C}_{7}-\text{C}_{10},\text{C}_{12},\text{C}_{14},\text{C}_{16}-\text{C}_{23},
 \\  	&~\eqref{User_Assosiaction_2}, \eqref{SIC_Con_1}, 
   	\eqref{Fronthualu1},\eqref{SIC_Con},\eqref{bound},\eqref{7final},\eqref{A_bound}.
   	\end{align}
   	}
   	   \end{subequations}
   	   \indent
\textcolor{blue}{
Notice that, \eqref{DC_Final} is an standard SDP programming which
 can be solved optimally by using an off-the-shelf optimization tool, e.g., CVX. 
 Denote by ${\bold{W}}_{k,n,f}^\ast,\forall k,n,f,$ the optimal value of variable ${\bold{W}}_{k,n,f}$ in the solution of \eqref{DC_Final}. 
 If ${\bold{W}}_{k,n,f}^\ast$ satisfies the rank-one constraint, i.e., Rank$(\bold{W}_{k,n,f}^\ast)=1$, the optimal solution $\bold{w}_{k,n,f}^\ast$ can be obtained by using the eigenvalue decomposition (EVD) of $\bold{W}_{k,n,f}^\ast$. 
As for the final step,~we employ SDP relaxation by removing constraint ${C}_{4}$.
The problem  \eqref{DC_Final} may not yield a rank-one solution. Thus, we propose a  penalty function to the objective function to penalize it \cite{DerrikIOT,STAR}. To this end, first, we introduce the following proposition.
\begin{proposition}
\label{Pro_Eigen}
The inequality $||\mathbf{Y}||_*\triangleq\sum_{i} \sigma_i\geq ||\mathbf{Y}||_2=\underset{i}{\text{max}}\{\sigma_i\}$ holds for any given $\mathbf{Y}\in \mathbb{H}^{m\times n},m,n\in\Bbb{N}$, where $\sigma_{i}$ is the $i$-th eigenvalue value of $\mathbf{Y}$. The equality holds if and only if $\mathbf{Y}$ is rank-one.
\end{proposition}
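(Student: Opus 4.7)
The plan is to exploit the fact that singular values are non-negative. First I would fix a small notational point: although the statement calls $\sigma_i$ an ``eigenvalue,'' the claim $\|\mathbf{Y}\|_*=\sum_i\sigma_i$ together with the use of a rectangular $\mathbf{Y}\in\mathbb{H}^{m\times n}$ forces us to read $\sigma_i$ as the \emph{singular values} of $\mathbf{Y}$ (they agree with $|\lambda_i|$ only for normal square matrices). With that convention, $\sigma_i\geq 0$ for every $i$, which is the only property I will actually use.

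Given the non-negativity, the inequality itself is a one-line estimate. Let $i^\star\in\arg\max_i\sigma_i$ and split the sum,
$$
\|\mathbf{Y}\|_*=\sum_i\sigma_i=\sigma_{i^\star}+\sum_{i\neq i^\star}\sigma_i\geq \sigma_{i^\star}=\|\mathbf{Y}\|_2,
$$
where the inequality uses only $\sigma_i\geq 0$. So the first half of the proposition reduces to observing that adding non-negative tail terms can only increase the maximum.

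For the equality characterization, I would argue the two directions separately. For the ``if'' direction, if $\operatorname{rank}(\mathbf{Y})=1$ then the SVD of $\mathbf{Y}$ has exactly one strictly positive singular value, so the sum $\sum_i\sigma_i$ collapses to that single value, which is precisely $\sigma_{\max}=\|\mathbf{Y}\|_2$, giving $\|\mathbf{Y}\|_*=\|\mathbf{Y}\|_2$. For the ``only if'' direction, equality forces the tail sum $\sum_{i\neq i^\star}\sigma_i=0$; combined with $\sigma_i\geq 0$ this yields $\sigma_i=0$ for every $i\neq i^\star$, so at most one singular value is nonzero and therefore $\operatorname{rank}(\mathbf{Y})\leq 1$. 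If the trivial case $\mathbf{Y}=\mathbf{0}$ has to be excluded, one simply assumes $\|\mathbf{Y}\|_2>0$, in which case the rank is exactly one.

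I do not anticipate a real obstacle: the statement is essentially a tautology once non-negativity of singular values is invoked. The only point requiring care is the notational one mentioned above, namely passing through the SVD rather than the spectral decomposition so that the argument is valid uniformly for rectangular $\mathbf{Y}\in\mathbb{H}^{m\times n}$; the rest is a direct application of the definitions of the nuclear and spectral norms.
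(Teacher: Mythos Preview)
Your argument is correct and complete: the inequality follows immediately from non-negativity of singular values, and the equality characterization via the vanishing tail sum is the standard route. The paper, however, states this proposition without proof (it is invoked as a known fact, with the subsequent penalty construction citing \cite{DerrikIOT,STAR}), so there is no proof in the paper to compare against; your write-up simply fills that gap.
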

Inspired by Proposition \ref{Pro_Eigen}, 
the equivalent form of the rank-one constraint $C_4$ can be written as $\varpi_{k,n,f}\triangleq||\mathbf{W}_{k,n,f}||_*-||\mathbf{W}_{k,n,f}||_2\leq 0,\forall k,n,f$.
 Hence, we use the penalty-based approach by integrating such a constraint into the objective function.~By introducing $\mu>1$ as a penalty factor, the objective function, by dropping the constant term $\log_2(e)$, can be rewritten as follows:
\begin{align}
\label{Rank}
&\sum_{f \in \mathcal{F}}\sum_{ k\in \mathcal{K}}\sum_{ n \in \mathcal{N}}({a_{f,k,n}}-b_{f,k,n})-q\cdot P_\text{Total}
\nonumber\\&
-f(\boldsymbol{\xi},\boldsymbol{\rho},\bold{x},\lambda)+\tilde{g}(\boldsymbol{\xi},\boldsymbol{\rho},\bold{x},\lambda) -{\mu}\times \sum_{f \in \mathcal{F}}\sum_{ k\in \mathcal{K}}\sum_{ n \in \mathcal{N}}  \varpi_{k,n,f}.
\end{align}
 For a sufficiently large value of $\mu$, maximizing (\ref{Rank}) under the constraints of \eqref{DC_Final} yields a rank-one solution by ensuring a small value of $\varpi_{k,n,f}$ \cite{STAR,DerrikIOT}.
However, (\ref{Rank}) is still a non-convex function over $\mathbf{W}_{k,n,f}$. Hence, we rewrite  $\varpi_{k,n,f}$ as follows:
\begin{align}
    \tilde{\varpi}_{k,n,f}=||\mathbf{W}_{k,n,f}||_*-\tilde{\mathcal{A}}(
    \mathbf{W}_{k,n,f}),\forall k,n,f,
\end{align}
where $\tilde{\mathcal{A}}(
    \mathbf{W}_{k,n,f})$ is obtained by \eqref{A_Approx}.
The above equation is updated iteratively with iteration number $t$. We also update the penalty factor in each iteration as $\mu^{(t+1)}=\alpha \mu^{(t)}$, where $\alpha>1$ is a constant.
Nevertheless, the returned solution may not be rank-one. In such cases, the \underline{ Gaussian randomization} method is exploited to obtain a feasible solution. 
Consequently, the optimization problem \eqref{DC_Final} can be written with same constraints by considering the following objective function:
\begin{align}\label{RankF}
&
	\sum_{f \in \mathcal{F}}\sum_{ k\in \mathcal{K}}\sum_{ n \in \mathcal{N}}({a_{f,k,n}}-b_{f,k,n})-q\cdot P_\text{Total}(\bold{W})
\nonumber\\
&\underbrace{\underbrace{- f(\boldsymbol{\xi},\boldsymbol{\rho},\bold{x},\lambda)+\tilde{g}(\boldsymbol{\xi},\boldsymbol{\rho},\bold{x},\lambda)}_{\text{Term 1}}\underbrace{-{\mu}\times \sum_{f \in \mathcal{F}}\sum_{ k\in \mathcal{K}}\sum_{ n \in \mathcal{N}}  \tilde{\varpi}_{k,n,f}}_{\text{Term 2}}}_{\text{Penalty Function}},
\end{align}
where ``Term 1" of the penalty function is to penalties of the relaxed  binary variables and ``Term 2"  is for rank-one solution discussed above.
}
\textcolor{blue}{
It is worth mentioning that the final optimization problem
is an standard SDP programming which
can be solved optimally using CVX. Therefore, an iterative algorithm can be employed to tighten the obtained
upper bound of \eqref{RankF} in iteration $t$ is used as an approximation point for the next iteration $(t+1)$~\cite{MISO_Optimal,Ata_TWC}. 
The main steps of the proposed solution algorithm are listed  in Algorithm \ref{Dinkelbach algorithm}. Next, we discuss the initialization algorithm and optimally  of the proposed solution. 
\begin{algorithm}[t]
	\small \caption{\small Proposed Iterative Algorithm  }
	\label{Dinkelbach algorithm}
	\centering
	\begin{algorithmic}[1]
		\REQUIRE{$q_0=0,~t=0,~\epsilon>0$, \textcolor{blue}{initialize feasible points as \textit{described in Section \ref{Ini}  }}
		\STATE $\quad$ $q_t$ : Dinkelbach parameter
		\STATE $\quad$ $t$ : Iteration index 
		\STATE $\quad$ $\epsilon$ : The maximum tolerance
		\WHILE {$q_t - q_{t-1} > \epsilon$}
		\STATE Obtain resource allocation policy through solving problem (\ref{DC_Final}) 
		\STATE  Set $t=t+1$ 
		\STATE  Set {$q^{\ast}=\frac{\log_2(e)\sum_{f \in \mathcal{F}}\sum_{ k\in \mathcal{K}}\sum_{ n \in \mathcal{N}}({a_{f,k,n}^{*}}-b^{*}_{f,k,n})}{P_\text{Total}(\bold{W}^{\ast})}=\max_{\Xi'} \eqref{RankF}
		$}
		\ENDWHILE
		\STATE Set \{$\bold{W'}^{\ast},\tilde{\bold{W}}^{\ast},\boldsymbol{\xi}^{\ast},\boldsymbol{\rho}^{\ast},\bold{x}^{\ast},\bold{b}^{\ast}$\} =$ \{\bold{W'}^{t-1},\tilde{\bold{W}}^{t-1},\boldsymbol{\xi}^{t-1},\boldsymbol{\rho}^{t-1},\bold{x}^{t-1},\bold{b}^{t-1}$\}
		}
		\RETURN
	\end{algorithmic}
\end{algorithm}
}
\textcolor{blue}{
\subsubsection{Initialization Algorithm}\label{Ini}
Due to the existence of Taylor approximations,  we should determine the initial feasible values for relevant variables.
The initial point for the relaxed variables denoted as
$\boldsymbol{\rho}^0=[\rho_{k,n}^{f,0}],~ \boldsymbol{\xi}^0=[\xi_{k,n}^{f,0}],~ \bold{X}^0=[x_{i,n}^{k,f,0}]$, beam-forming variables, i.e., $\bold{W}^{0},~\bold{W'}^{0},~\text{and}~\tilde{\bold{W}}^{0}$ (subscripts are removed for simplicity), and  auxiliary variables, i.e., $\bold{b}^0=[b_{f,k,n}^{0}],~\bold{e}^{0}=[e_{i,n,f'}^{0}],~\text{and}~\bold{g}^{0}=[g_{i,k,n,f'}^0]$. 
We randomly generated $\boldsymbol{\rho}^0,~ \boldsymbol{\xi}^0,~\text{and}~ \bold{X}^0$ between zero and one. For beamforming variable, it should satisfy the power budget constraint. Therefore,  we set $\bold{w}^{0}=\sqrt{\frac{P_{\max}^1}{M}}\cdot \bold{r}_{M\times 1}$, where  $P_{\max}^1$ denotes the power budget of the macro AAU, $M$ is the number of antennas, and $\bold{r}_{M\times 1}$ is the vector with size $M\times 1$ and random elements between zero and one.
Now, according to the previous definitions, we can set  $\bold{W}_{k,n,f}^{0}=\bold{w}^{0}(\bold{w}^{0})^{\dagger}~$, $\bold{W'}_{i,k,n,f}^{0}={x}_{i,n}^{k,f,0}\bold{W}_{k,n,f}^{0},~\text{and}~\tilde{\bold{W}}_{k,n,f}^{0}=\rho_{k,n}^{f,0}\bold{W}_{k,n,f}^{0},~\forall f,k,i,n$.
The feasible points for the auxiliary variables can be set as follows:
\begin{align}
  &  b_{f,k,n}^{0}\triangleq \phi_{f,k,n}^0=\ln\Big(\sum_{f' \in \mathcal{F}}\sum_{i\in \mathcal{K}\backslash \{k\}}\text{Tr}[\tilde{\bold{H}}_{k,n,f'}\tilde{\bold{W}}_{i,n,f'}^{0}]\nonumber\\ &-\text{Tr}[\tilde{\bold{H}}_{k,n,f'}\bold{W'}_{i,k,n,f'}^{0}]\nonumber
\\&
+
 e_{k,n}^{f'}\Big(\mathcal{A}(\bold{W}_{i,n,f'}^{0})-\mathcal{B}(\bold{W^{'}}_{i,k,n,f'}^{0})\Big)+\sigma^2_{k,n,f},\Big).
\end{align}
Similarly, the feasible points of the auxiliary variables for handling of the SIC ordering constraint \eqref{SIC_C_u} are obtained by
\begin{align}
   & e_{i,n,f'}^{0}\triangleq \ln(\mathcal{E}_{i,n,f'}^{0}),\forall i,n,f',
    \\
   & g_{i,k,n,f'}^0\triangleq\ln(\mathcal{G}_{i,k,n,f'}^{0}),
\end{align}
where $\mathcal{E}_{i,n,f'}^{0}$ and $\mathcal{G}_{i,k,n,f'}^{0}$ are obtained by \eqref{SICE} and \eqref{SICG}, respectively,  with replacing the above initial values. 
 However, the randomly generated values may not be a feasible solution. In such cases, we generate a new one, until we find a feasible point. 
It is worthwhile mentioning that this method may not be efficient, especially, when we have more constraints and high dimensional variables. For such cases, the initialization algorithm proposed in \cite{STAR} can be utilized. Further, sometimes  for a given resources, e.g., power budget, the problem may not be feasible. In this case, the elasticization method can be used \cite{ElasBook}.
}
\indent
\textcolor{blue}{
\subsubsection{Optimality Analysis}
In the following, we discuss about the optimally of the proposed algorithm.  As discussed before, we resort some approximations and relaxation methods,
i.e., MM, Taylor series expansion, and
SDR technique, to transform the original problem into a convex
problem. 
In particular, we apply a penalty factor for both relaxation and SDR to guarantee tightness of solution \cite{DerrikIOT}. More specifically, we consider two penalty functions
in the new objective function.~In fact, the penalty factors $ \lambda $ and $ \mu $ are
adopted to penalize the objective function when the integer values as well as rank-one solution are not available. Hence, for large values of the penalty parameters, the value of relaxation form of the binary variables are binary and beamforming matrix is rank-one. In this case, the maximum point of the new objective function 
is the maximum point of the original problem \cite{STAR,DerrikIOT}.
We further adopt the MM technique that may not approach the globally optimum solution. However,
the solution achieves a closely optimal solution due to the performance of the MM algorithm \cite{FNOMA}.
}
 	\subsection{Low Complexity Algorithm Design}
 	It can be perceived that Algorithm 1 achieves a close to optimal 
 solution.~However, it may not suitable for large scale resource allocation with limited computational complexity.~Now, we aim at designing a low complexity algorithm for improving the practicality of Algorithm 1.~The proposed low-complexity algorithm is based on the heuristic solution known as two-step iterative approach.~In particular, the original problem is decomposed into two sub-problems, namely: 1) scheduling (i.e., joint user association and subcarrier allocation) and 2) beamforming design and SIC ordering.~Each subproblem can be solved while fixing the variables of other problems.~The  iterative procedure is adopted to find the scheduling, beamforming design, and SIC ordering. 
\subsubsection{Solution of the Scheduling Subproblem}
\textcolor{blue}{
By assuming fixed beamforming design and SIC ordering parameters,
 the scheduling sub-problem is formulated as follows:
\begin{subequations}\label{subproblem_sic_order_main_new}
	\begin{align}\label{Reeq66}
	\max_{\boldsymbol{\rho}}\;&~\eta_{\text{EE}}^{\text{Worst}}
	\\\label{power_cons.1}
	\text{s.t.}~~&
	\text{C}_{2}:\sum_{ k\in \mathcal{K}}\rho_{k,n}^{f}\le L_{n}^{f},\,\forall n\in\mathcal{N}, f\in\mathcal{F},
	\\&\label{Ree9i1}
	\text{C}_{4}:
\rho_{k,n}^{f}\in
	\begin{Bmatrix}
	0,
	1
	\end{Bmatrix},\,\,\forall k\in \mathcal{K}, n\in\mathcal{N}, \forall f\in\mathcal{F},
	\\&\label{SICs.21}
 \nonumber
	\eqref{User_Assosiaction_2}, \eqref{7final}, \eqref{SIC_Con_1}, \eqref{SIC_Con}.  
	\end{align}
\end{subequations}
}
\textcolor{blue}{
Problem \eqref{subproblem_sic_order_main_new} is integer non-linear programming. To solve it,
we propose a low-complex modified \textit{two-sided many-to-many} matching algorithm \cite{Zakeri}. As stated before, the scheduling variable determines both AAU selection and subcarrier assignment. To complete it, we propose two-stage matching algorithm\cite{Zakeri, Matching_2, Matching_II, Matching_1}. In the proposed matching algorithm, in the first stage,   we match users to the AAUs (``AAU selection'') and in the second stage, we match users of each AAUs to the subcarriers (``subcarrier assignment''). In doing so, each user $k\in\mathcal{K}$ constructs its  own preference list of AAUs denoted by $ \mathcal{L}_{k}^{\text{AAU}}=\{l_{k,f}\},\forall k,f, $ based on the path loss, i.e., the nearest AAU has the maximum preference and is the first in list $ \mathcal{L}_{k}^{\text{AAU}} $. 
All paired users of each AAU $ f $ is  indicated by $ \mathcal{K}_f $ which is the output of the first stage of matching process. After that each user in $\mathcal{K}_f   $ regenerates the preference list with respect to the each subcarrier $n$ based on the  $ |\bold{h}_{k,n,f}^{\dagger}\bold{w}_{k,n,f}|,~\bold{h}_{k,n,f}\in\mathcal{H}_{k,n,f},~ \forall k\in\mathcal{K}_f $ as a matching criteria. After that, the second stage of matching process is started to assign  subcarriers to users  in each AAU. 
}
\begin{proposition}
Each stage of the adopted matching algorithm after a few numbers of iterations will be a two-sided exchange-stable matching. Therefore, the devised two-stage matching algorithm is an stable matching algorithm.  
\end{proposition}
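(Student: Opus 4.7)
The plan is to establish exchange-stability separately for each of the two matching stages, and then conclude that the composite two-stage procedure inherits stability. I would begin by fixing the standard definitions: for the first stage, a swap-blocking pair is a pair of users $(k,k')$ with currently assigned AAU sets $\mathcal{F}_k,\mathcal{F}_{k'}$ such that exchanging a single AAU between them would be strictly preferred by both users while not violating the quotas; exchange stability is then the absence of any such swap-blocking pair. An analogous definition applies in the second stage with the roles of users in $\mathcal{K}_f$ and subcarriers in $\mathcal{N}$.

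First I would show convergence. Because both sides' preference lists are finite and totally ordered (AAUs are ranked by path-loss in stage one, and subcarriers are ranked by the beamforming metric $|\mathbf{h}_{k,n,f}^{\dagger}\mathbf{w}_{k,n,f}|$ in stage two), and because each accepted swap strictly increases the (suitably defined) aggregate preference potential while the number of feasible matchings subject to constraints \textbf{C}$_2$, \eqref{User_Assosiaction_2} is finite, the algorithm cannot cycle and must terminate after finitely many iterations. This convergence argument is essentially the same as in \cite{Zakeri, Matching_1, Matching_II}.

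Next I would establish exchange stability at termination by contradiction. Suppose the first stage terminates at a matching $\Psi$ but admits a swap-blocking pair $(k,k')$. Then by definition both users strictly prefer the post-swap assignment, and the quotas are still met. But the matching algorithm by construction proposes exactly such mutually improving swaps whenever they exist; hence it could not have halted at $\Psi$, a contradiction. Therefore $\Psi$ is two-sided exchange-stable. The identical argument, applied within each AAU's user-subcarrier submatching with the updated preference lists based on the beamforming gain, yields exchange stability of the second stage.

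The last step is to conclude stability of the composite algorithm. Since the two stages are executed sequentially, with the second stage taking the stable partition $\{\mathcal{K}_f\}_{f\in\mathcal{F}}$ from the first stage as input, the overall outcome is a pair of stable matchings whose composition satisfies all constraints \textbf{C}$_2$, \eqref{User_Assosiaction_2}, \eqref{SIC_Con_1}. I expect the main obstacle to be a careful treatment of feasibility during swaps: one must verify that a proposed exchange does not violate the per-AAU reuse bound $L_n^f$, the one-AAU-per-subcarrier rule \eqref{User_Assosiaction_2}, or the SIC-ordering compatibility \eqref{SIC_Con_1}, and that infeasible swaps are excluded from the definition of a swap-blocking pair; otherwise the contradiction step collapses. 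Once this feasibility bookkeeping is in place, the argument reduces to the classical exchange-stability proof for many-to-many matching as in \cite{Matching_2,Matching_II}.
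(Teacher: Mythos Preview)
The paper does not actually give a proof of this proposition; it simply writes ``Please see [Proposition 1, \cite{Zakeri}]'' and defers entirely to that reference. Your sketch is therefore not competing against an in-paper argument but rather reconstructing what the cited result presumably contains. The outline you give---finite preference lists, a potential that strictly increases under any accepted swap so the process terminates, and then a contradiction argument showing that termination implies no swap-blocking pair remains---is exactly the standard route to two-sided exchange stability for many-to-many matchings, and it is almost certainly what \cite{Zakeri} does. In that sense your proposal is consistent with the paper's intent.

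One point deserves more care than you give it. Your final step asserts that because each stage is exchange-stable, ``the composite two-stage procedure inherits stability.'' This does not follow automatically: exchange stability of the user--AAU matching and, separately, of each user--subcarrier matching within an AAU does not by itself rule out a cross-stage blocking move (e.g., a user who would prefer to switch AAU \emph{and} subcarrier simultaneously). The proposition as stated is vague on what ``stable'' means for the composite, and the paper sidesteps the issue by citation, but if you want a self-contained argument you must either (i) define composite stability as the conjunction of stage-wise stabilities, in which case the conclusion is immediate but somewhat weak, or (ii) argue that the sequential structure together with the fixed preference criteria (path loss first, then beamforming gain) precludes cross-stage blocking pairs. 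Without one of these, the last sentence of your proof is an assertion rather than a deduction.
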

\begin{proof}
	Please see [Proposition 1, \cite{Zakeri}].
\end{proof}
\subsubsection{Beamforming Design and SIC Ordering Subproblems}	 
For the given scheduling, we aim to solve the problem of beamforming design and SIC ordering. To this end, we first define a 
 new varaible  as $\tilde{\bold{W}}_{i,k,n,f}\triangleq \xi^{k}_{i,n}\bold{W}_{k,n,f}$. In a similar manner, we adopt the same approach for obtaining the beamforming design as well as SIC ordering.~Consequently, the problem at hand can be written mathematically as 
 \begin{subequations}\label{Change_DC_Problem}
	\begin{align}
	\nonumber
\max_{\tilde{\bold{W}},{\bold{W}},\boldsymbol{\xi},\bold{a},\bold{b}}&\sum_{f \in \mathcal{F}}\sum_{ k\in \mathcal{K}}\sum_{ n \in \mathcal{N}}({a_{f,k,n}}-b_{f,k,n})\log_2(e)-q\cdot P_\text{Total}(\bold{W})\\&-f(\boldsymbol{\xi},\lambda)+\tilde{g}(\boldsymbol{\xi},\lambda)-{\mu}\times \sum_{f \in \mathcal{F}}\sum_{ k\in \mathcal{K}}\sum_{ n \in \mathcal{N}}  \tilde{\varpi}_{k,n,f}
	\\\label{PC}
	\text{s.t.}~~
	&\text{C}_{1}:\sum_{k\in \mathcal{K}}\sum_{n\in \mathcal{K}}\text{Tr}[\rho_{k,n}^{f}{\bold{W}}_{k,n,f}]\leq P^{f}_{\text{max}},~ \forall f \in \mathcal{F},\\	&\text{C}_{2}:\bold{W}_{k,n,f}\succeq \bold{0},\\
	& \text{C}_{3}: \exp ({a_{f,k,n}})\geq \sigma^{2}_{k,n,f},~\exp({b_{f,k,n}})\geq\sigma^{2}_{k,n,f}\\
	& \text{C}_{4}: a_{f,k,n}\geq b_{f,k,n}\label{a_bound}\\
	& \text{C}_{5}: \psi_{f,k,n}\geq \exp(a_{f,k,n}),~\phi_{f,k,n}\leq \exp(b_{f,k,n}),\\
	& \text{C}_{6}:~\tilde{\bold{W}}_{i,k,n,f}\preceq  P_{\text{max}}^{f} \bold{I}_{M}\xi_{i,n}^{k},\\
	&\text{C}_{7}:~\tilde{\bold{W}}_{i,k,n,f}\preceq \bold{W}_{k,n,f},\\
	&\text{C}_{8}:~\tilde{\bold{W}}_{i,k,n,f}\succeq \bold{W}_{k,n,f}-(1-\xi_{i,n}^{k}) P_{\text{max} } \bold{I}_{M},\\
	&\text{C}_{9}:~\tilde{\bold{W}}_{i,k,n,f}\succeq \bold{0},\\
	& \text{C}_{10}:~0\leq\xi_{i,n}^{k}\leq 1,\\
	& \eqref{SIC_Con_1}, \eqref{Fronthualu1}, \textcolor{blue}{\eqref{SIC_Con}}, \eqref{7final},
	\end{align}
\end{subequations}
where $\tilde{\bold{W}}$ stands for all of $\tilde{\bold{W}}_{i,k,n,f},~\forall k,i,n,f,~i\neq k$.
This problem is convex and can be solved via efficient convex programming libraries like CVX.
\textcolor{blue}{
\subsection{Complexity Analysis of the Solution Algorithms}
This section provides the complexity analysis and the  comparison of the proposed solution algorithms. 
In the first algorithm, i.e., Algorithm \ref{Dinkelbach algorithm}, we solve the original problem in one step as the form of \eqref{DC_Final} via CVX. In this problem, there are totally $7KNF+5K^2NF$ variables and $10KNF+8K^2NF+4K^2NFM^2+F+2KN+4KNFM^2$ convex and affine constraints. 
Note that the term $K^2$ instead of $K$ is from considering the SIC ordering variable and the resulting constraints.
Thus, the complexity of the algorithm per iteration is $\mathcal{O}\big((7KNF+5K^2NF)^3(10KNF+8K^2NF+4K^2NFM^2+F+2KN+4KNFM^2)\big)$ \cite{Comp1}. Therefore, by considering $K>F~\text{and}~N> F$, which is logical for practical setting, and for sufficiently large values of $K$ and $N$, the overall order of the complexity of our first algorithm can be calculated by $\mathcal{O}\big((K^2NF)^4\big)$.
In the second  algorithm, we applied the  alternating approach. Based on this, the overall complexity is a linear combination of the complexity of solution of
each sub-problem. The solution of the first sub-problem is a matching algorithm whose complexity  is a linear function of the number of the sub-carriers, users,  and AAUs, i.e., $\mathcal{O}(K\times N\times F)$. For the second sub-problem \eqref{Change_DC_Problem}, we also applied  a similar approach as the first algorithm, but the number and dimension of variables as well as constraints are considerably reduced. Problem \eqref{Change_DC_Problem} includes
 $4KNF+3K^2NF$ variables and $7KNF+5K^2NFM^2+2F$ convex and affine constraints. Based on the solution algorithm of \eqref{Change_DC_Problem}, the computational complexity per iteration is $\mathcal{O}\big((4KNF+3K^2NF)^2(7KNF+5K^2NFM^2+2F)\big)$. Thus, the overall complexity is $\mathcal{O}\big((K^2NF)^3\big)$.
As a result, both of the proposed algorithms have a polynomial order of complexity, whereas the overall complexity of an exhaustive method is exponential over the number of constraints and search variables.
}
 \tikzstyle{decision} = [ellipse, draw, fill=pink!18,
 text width=15em, text centered, rounded corners, minimum height=3.5em]
 \tikzstyle{block} = [rectangle, draw, fill=blue!10,
 text width=25em, text centered, rounded corners, minimum height=4.5em]
 \tikzstyle{line} = [draw, -latex']
 \tikzstyle{cloud} = [draw, ellipse,fill=red!20, node distance=2cm,
 minimum height=2em]
\section{Numerical Evaluation}\label{Simulation_results}
	This section presents numerical results to assess and compare the designed SIC ordering and beamforming scheme under various configurations which makes comparisons with conventional ones. We provide numerical results regarding to different metrics such as energy efficiency and utilized power under variation of different parameters. 
\subsection{Simulation Setup}
We consider a C-RAN network such that a high power AAU is located at the center of a service coverage area  with $ 500 $ m radius, and $ 3 $ 
low power AAUs  with a circular coverage area with $ 20 $ m radius which are randomly located \cite{Ro_Moltafet}.  Also, the number of total users is $K=8$  and the number of antennas for each AAU is $M=3$ \cite{Ro3,Good_Robust}. 
Unless otherwise stated, the simulation results are based on values of the parameters which are listed in Table \ref{Simulation_Sett}. Moreover, the small-scale fading of the channels is assumed to be Rayleigh fading and the large-scale fading effect is denoted by $d_{k,f}^{-\alpha}$ to incorporate the path-loss effects, where $d_{k,f}$ is the distance between user $k$ and AAU $f$ measured in meters, and $\alpha=3$ is the path-loss exponent \cite{Ro3}.
	\begin{table}
		\tiny
	\centering
	\caption{Main parameters values for network setup }
	\label{Simulation_Sett}
	\begin{tabular}{ c|c}
		\hline
		\textbf{Parameter(s)}  & \textbf{Value(s)}  \\
		\hline
		$K/N/M/F$  &$8/5/3/4$
		\\
		\hline
		\text{Coverage radius of Macro AAU/Small AAU} &   $500/20$\,\text{m}
				\\
		\hline
	Power budget of Macro AAU/Small AAU/$C_{\text{Circuit}} $&   $40/30/30$\,\text{dBm}
		\\
		\hline
		$\lambda/\beta  $& 
		$10^{5}$/$\frac{1}{4}$
		\\\hline
		$\sigma_{f,k,n}^2/\delta_{k,n}^f$& 
		$-174~ \text{dBm/Hz}/0.001$
		\\
		\hline
	\end{tabular}
\end{table}
\subsection{Results Discussions}
In this subsection, we discuss about the simulation results achieved for the following main scenarios:
\begin{enumerate}
	\item \textit{Proposed near-optimal solution and SIC ordering method (\textbf{Proposed algorithm})}
	\item \textit{Proposed two-step solution and SIC ordering as a baseline (\textbf{Baseline 1}):} In this baseline, the main problem is solved iteratively in which the scheduling variable is obtained by the devised   matching algorithm. 
	\item\textit{SIC ordering based on channel gains as a baseline (\textbf{Baseline 2}):} In this baseline, the decoding of users is based on the absolute value of channel gains as considered in \cite{Ro2,Dai,FNOMA,MISO_Optimal,Moltafet,Minorization,ICC,Sharma}. Note that in this algorithm, SIC constraints lead to the lower achievable rates. 
\end{enumerate}
	 \begin{figure}[t]
	\centering
	\includegraphics[width=.45\textwidth]{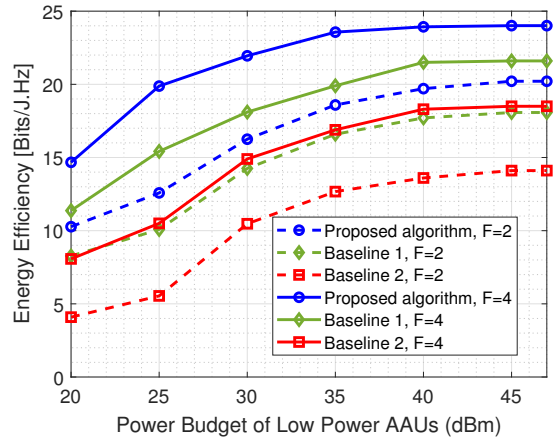}
	\caption{EE versus the value of power budget for each AAU for different number of AAUs.}
	\label{EE_Power}
\end{figure}
All the above scenarios are investigated under different system parameters which are discussed in the following.  First, we investigate the effect of variation of power budget on the EE of the network for different number of AAUs in Fig.  \ref{EE_Power}.  In this figure, we change the power budget values from $ 20 $ dBm to  $ 47 $ dBm and also we observe that the EE first increases and then is saturated when the transmit power is larger than $35$ dBm, i.e., $ P_{\max}^f=35 $ dBm. This is because of exploiting power control via designing beamforming for all schemes, and also we can deduct that the beamforming works well to improve EE up to the maximum point. Besides, we observe that the performance of our proposed SIC ordering and beamforming algorithm in terms of EE significantly outperforms the other baselines. The main reason behind this achievement is that  the proposed   SIC ordering algorithm is performed via optimizing the SIC ordering variable which is exploited to handle the intra-cell and inter-cell interference to maximize EE. We also observe that our proposed algorithm has a better performance as compared to baseline 1 due to performing resource allocation design and SIC ordering jointly in a one-step optimization problem. While, in baseline 2,  SIC ordering is based on the absolute value of the channel gains. However, SIC ordering in baseline 2 is applicable with an acceptable performance guarantee for single antenna systems and cannot be applied for MISO NOMA systems, efficiently. Also, this figure investigates the effect of AAUs on the EE of the system.~It is evident that our proposed algorithm outperforms  other baseline schemes due to performing joint user association and subcarrier allocation which improves significantly the performance  of the system. 
	 \begin{figure}[t]
	\centering
	\includegraphics[width=.45\textwidth]{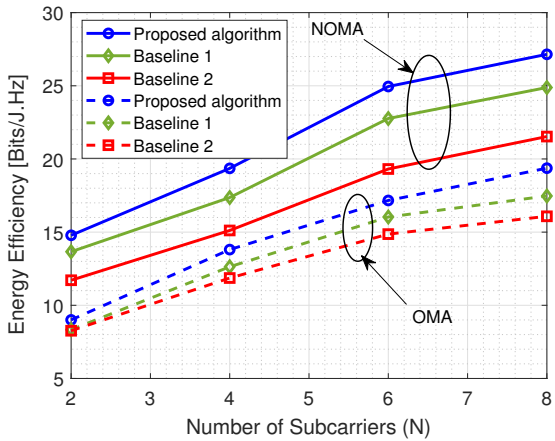}
	\caption{EE versus the number of shared  subcarriers  in each AAU for NOMA and OMA.}
	\label{EE_Subcarrier}
\end{figure}

In Fig. \ref{EE_Subcarrier}, we study the effect of the number of subcarriers and also the performance of NOMA as compared to the  conventional OMA on the baseline schemes. 
It is observed that the improvement of the proposed algorithm compared to the baselines is sustainable. This improvement is achieved   not only in the NOMA-based systems but also in the OMA-based systems. Note that in OMA, there is no intra-cell inference due to orthogonality in the utilization of the subcarriers. Besides,  our proposed SIC ordering controls the inter-cell interference.  Thus, our designed SIC ordering and beamforming are applicable not only for NOMA but also for any co-channel interference suffered communication networks without any need on the CSI of these channels. Consequently, the improvement of EE in NOMA is more than OMA. Also we can conclude that the performance of NOMA is much better than that of OMA due to exploiting each subcarrier more than one in the network. 
	  \begin{figure}[t]
	 	\centering
	 	\includegraphics[width=.45\textwidth]{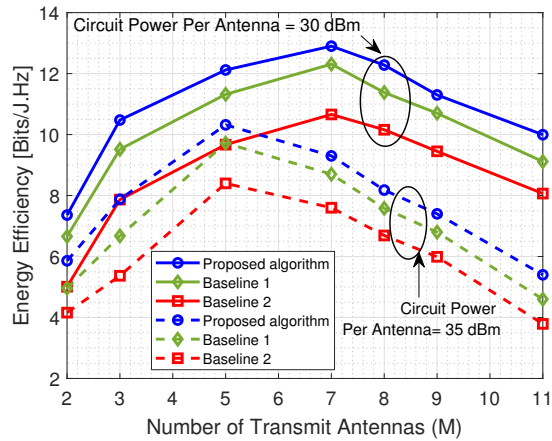}
	 	\caption{Energy efficiency versus the of number of antennas in each AAU.}
	 	\label{EE_Noantennas}
	 \end{figure}
In Fig. \ref{EE_Noantennas}, we evaluate the EE of the considered schemes while considering the effect of the number of antennas in each AAU for different circuit power values.
%
%
 As can be seen from this figure, EE increases as the number of antennas increases due to the array giants and spatial diversity, and then drops  after certain value for the number of antennas, i.e., $ M>7 $. 
This is because that employing more antenna 
enables high degrees of freedom in the spatial diversity gain which turns on improving SE while exceeding 
power consumption, specifically hardware power consumption which linearly increases as $ M $ increases  due to activating RF chain per each antenna. While the SE is changed slowly  with respect to the value of $ M $ which turns a strike a balance between SE and power consumption which leads to trade-off  between SE and EE. 
Further, for the higher values of circuit power, the maximum value of EE is obtained for a lower number of $ M $. This is because  that system’s aggregated power consumption has a
greater impact on improving system’s EE than maximizing the SE as SE commensurates to log-function.
 The interesting results from this evaluation can be explained as two folds:
 First,  we need  an appropriate beamforming design  for massive antennas communication systems and second there is a need for designing an efficient antenna selection algorithm  to select appropriate antennas and then 
%
%
%
  doing precoding, especially for massive MIMO mmWave 6G networks. Also, this figure reveals  the performance of our SIC ordering algorithm for massive antennas networks. In our future work, we will propose  an appropriate antenna selection (finding optimum $M $) as well as beamforming design for massive mmWave networks.
	 \begin{figure}[t]
	\centering
	\includegraphics[width=.45\textwidth]{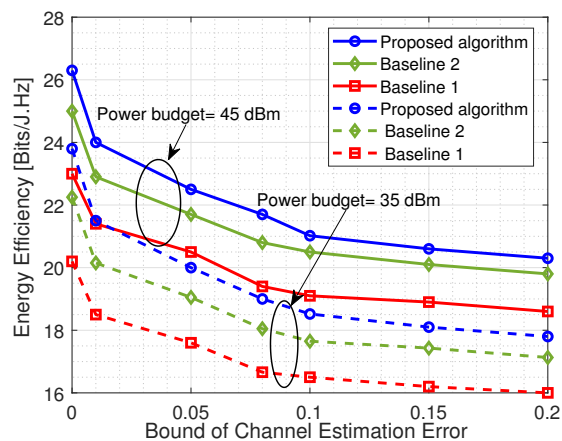}
	\caption{Impact of channel estimation error bound   on the EE.}
	\label{EE_Error_Power}
\end{figure}

Fig. \ref{EE_Error_Power} illustrates the impact of channel estimation error on the EE for different values of the power budget. 
 %
As can be seen, the relation between the error bound and the system EE is indirect. Also, the upper bound is obtained  for  the perfect CSI setting\footnote{It is worthwhile to note that the zero error is equivalent to the perfect CSI in which the complete information of channels of users is available in the BBU side.}.
	It is seen that with increasing the error bound, our algorithm has a vigorous capability for deducting the impact of the imperfect  CSI. It can be also observe that the performance gap with respect to the baseline schemes becomes significantly large. This is because  of the existing indirect efficacy of the error bound on the performance of the SIC ordering based on the channel gains. It is worthwhile to note that performing SIC based on the channel gains needs  full CSI which is not practical in the real wireless communication networks.
Furthermore, more power consumption is needed for large  values of the error bound to reach the same SE which makes the reduction on EE with increasing  the error bound. In other words, for a fixed value of consumed power, the SE tends to  low values for the higher   error bounds which results in the  EE reduction.
	 \begin{figure}[t]
	\centering
	\includegraphics[width=.45\textwidth]{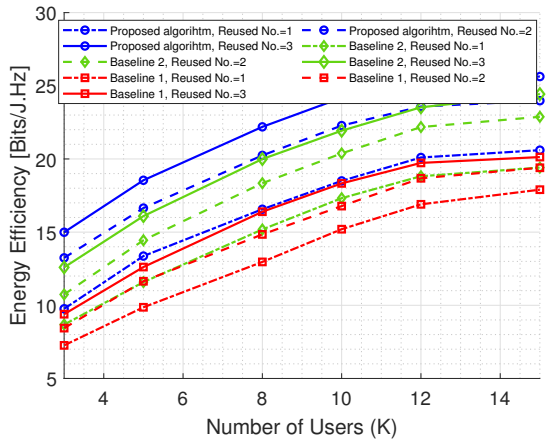}
	\caption{EE versus the number of users for different reused numbers.}
	\label{EE_User_Reused}
\end{figure}
Moreover, we study the behavior of EE achieved via the scenarios with respect to the number of users and maximum reuse number of each subcarrier in each AAU which is plotted in Fig.  \ref{EE_User_Reused}. 
Note that when the reuse number is $ 1 $, the considered network operates as OMA while for values of $ 2 $ and $ 3 $, the network operates as NOMA.  From this figure, it can be observed  that the EE grows with  the number of users and    reuse factor of NOMA because of multi user diversity. 
In addition, for the higher number of reuse factors in NOMA, the improvement on the EE is low. 
 The reason behind this trend is that the high reuse factors in NOMA
 boosts the denominator of the system throughput
 due to incorporating intra-cell interference in the data rate which results in an exceeding power consumption and consequently degrading the EE of the system.
 %
  Furthermore, we can declare that for the high reuse factor, it is better to adopt clustering, i.e., user pairing methods, especially for the large number of users (e.g., massive connections). 
	 
\textcolor{blue}{Finally, we investigate the performance gap between the exactly robust solution denoted by ExRS (approach in Remark \ref{EXRS}) and strictly bounded robust solution denoted by SBRS, and the behavior of the introduced penalty function in \eqref{RankF}. For the First, Fig. \ref{Ex_SB_Com} shows the performance comparison between ExRS and SBRS under variation of channel estimation error bound for different power budget for macro AAU denoted by $P_{\max}^1$. 
As can be seen from the figure, for low power budget and small values of the error approximately the performance of ExRS and SBRS are close to each other.
	 \begin{figure}[h!]
	\centering
	\includegraphics[width=.45\textwidth]{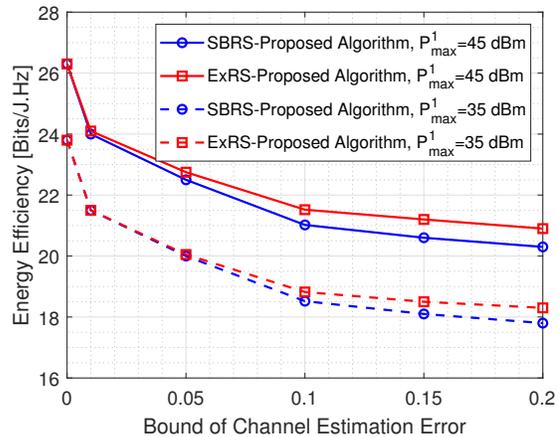}
	\caption{EE versus the channel estimation error bound  for different robust algorithms.}
	\label{Ex_SB_Com}
\end{figure}
Moreover, the effect of penalty factors on the penalty function is examined in Fig. \ref{P_F_1}. In this figure, $P_{\max}^{1}=35~ \text{dBm},~K=8,~N=5$, $\lambda=\mu$, and both axes are plotted in the Logarithmic scale.
Note that in ``Without Round', the penalty function values is the penalty function in \eqref{RankF} (``Term 1+Term 2"), and
in ``With Round'',  the penalty function is only the penalty term for the rank-one (``Term 2" in \eqref{RankF})  due to $h(\boldsymbol{\xi},\boldsymbol{\rho},\bold{x},\lambda)=0$. As can be seen form Fig. \ref{P_F_1}, the penalty function values close to $0$, for the sufficiently large value of penalty factors, e.g., $10^{5}$, which ensures achieving a rank-one solution and the relaxed binary variables converges to binary ones. 
 	 \begin{figure}[h!]
	\centering
	\includegraphics[width=.45\textwidth]{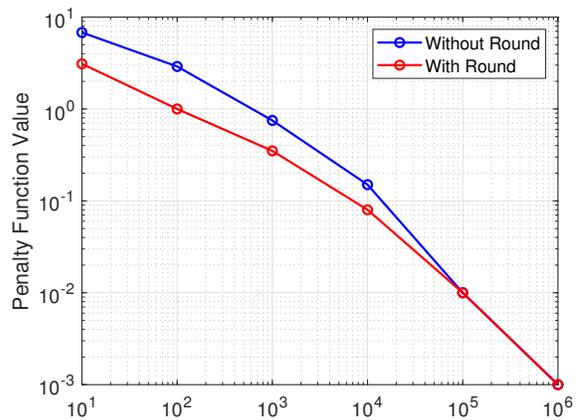}
	\caption{Impact of penalty factors on the penalty function of \eqref{RankF}.}
	\label{P_F_1}
\end{figure}
}

 \section{Conclusion Remarks}\label{Conclusion}
 In this paper, we proposed a novel SIC ordering and also provided a robust and efficient algorithm for resource allocation and beamforming design for C-RAN assisted MC NOMA networks with imperfect CSI. In particular, we  formulated the worst-case EE by optimizing the SIC ordering, beamforming, and scheduling variables.
  Although, the  underlying  optimization problem is non-convex which is in the form of MINLP, we adopted  majarization-minimization and penalty factor methods to convert it into the convex one.  Furthermore, we provided a low complexity algorithm based on two-step iterative solution to  strike the balance between the complexity and performance gain. 
   Extensive simulations were provided to assess the performance of the  proposed algorithms. Moreover, simulation results unveil the superiority of the proposed algorithm as compared to the baseline schemes. 
  \\\indent
 The SIC ordering algorithm in NOMA-based communications has not been well addressed, especially restraining the inter-cell interference, and it would be a critical issue and pivotal impact on the performance of co-channel interference communication networks. 
 In order to broaden a new horizon that is inferred from the results for the future of massive antennas and high
 energy efficiency that necessitates the ubiquitous 6G, it is crucial to design efficient antenna selection, clustered
 beamforming, channel estimation, and spectrum management algorithms in a future wireless networks. 
\section{Appendix}
\vspace{-3em}
\textcolor{blue}{\subsection{Proof of Proposition 1}
The proof includes two parts: 1) minimization and 2) maximization which are discussed as follows. 
\\\indent 
\textit{Proof of minimization:}
Using an arbitrary positive multiplier $\phi\ge 0$, the Lagrangian function of \eqref{minimumterm} can be written as
\begin{align}\nonumber
&\mathcal{L}(\boldsymbol{\Delta}_{k,n,f},\phi)=
\text{Tr}[(\tilde{\bold{H}}_{k,n,f}+\boldsymbol{\Delta}_{k,n,f})\bold{W}_{k,n,f}]\nonumber\\&+\phi(\text{Tr}~[\boldsymbol{\Delta}_{k,n,f}\boldsymbol{\Delta}_{k,n,f}^{\dagger}]-e_{k,n,f}).
\end{align} 
Setting the derivative of the Lagrangian with respect to $\boldsymbol{\Delta}_{k,n,f}$  to zero, we have:
\begin{equation}
\nabla_{\boldsymbol{\Delta}^{*}_{k,n,f}}\mathcal{L}(\boldsymbol{\Delta}_{k,n,f},\phi)=\bold{W}^{\dagger}_{k,n,f}+\phi\boldsymbol{\Delta}_{k,n,f}=0.
\end{equation}
The optimal value of $\boldsymbol{\Delta}_{k,n,f}$ is denoted by $\boldsymbol{\Delta}^{*}_{k,n,f}$ which can be obtained as
$\boldsymbol{\Delta}^{*}_{k,n,f}=-\frac{1}{\phi}\bold{W}^{\dagger}_{k,n,f}.$
We also differentiate the Lagrangian function with respect to $\phi$ and equates it to zero as
$\nabla_{\phi}\mathcal{L}(\boldsymbol{\Delta}_{k,n,f},\phi)=0,$
where the optimal solution for $\phi$ is given by
$\phi^{*}=\frac{1}{e_{k,n,f}}\|\bold{W}^{\dagger}_{k,n,f}\|.$
By substituting the optimal value of $\phi$, i.e., $\phi^{*}$, we conclude that
\begin{equation}
\boldsymbol{\Delta}^{f,\text{min}}_{k,n}=-e_{k,n,f}\frac{\bold{W}^{\dagger}_{k,n,f}}{\|\bold{W}^{\dagger}_{k,n,f}\|}.
\end{equation}  
Hessian of the Lagrangian function verifies the obtained solution is minimum. To this end, we need to check the second derivative at the optimal point that should be positive semi-definite, i.e.,  \cite{Der_S}
\begin{align}
 &   \nabla^2_{\boldsymbol{\Delta}_{k,n,f}}\mathcal{L}(\boldsymbol{\Delta}_{k,n,f}^{*},\phi^{*})
    =\\&\frac{\|\bold{W}^{\dagger}_{k,n,f}\|}{e_{k,n,f}}\bigg(\text{vec}\{\bold{I}_{M}\}\text{vec}\{\bold{I}_{M}\}\bigg)^{T}\succeq \bold{0}.
\end{align}
\\\indent \textit{Proof of maximization:}
The Lagrangian of \eqref{21max} is given by
\begin{align}\nonumber
&\mathcal{L}(\boldsymbol{\Delta}_{k,n,f'},\phi)=
\text{Tr}[(\tilde{\bold{H}}_{k,n,f'}+\boldsymbol{\Delta}_{k,n,f'})\bold{W}_{i,n,f'}]\nonumber\\&-\phi(\text{Tr}~[\boldsymbol{\Delta}_{k,n,f'}\boldsymbol{\Delta}_{k,n,f'}^{\dagger}]-e^{2}_{k,n,f'}).
\end{align}
By differentiating above function with respect to $\boldsymbol{\Delta}_{k,n,f'}$ and setting the derivative to zero, we have:
\begin{equation}
\nabla_{\boldsymbol{\Delta}^{*}_{k,n,f'}}\mathcal{L}(\boldsymbol{\Delta}_{k,n,f'},\phi)=\bold{W}^{\dagger}_{i,n,f'}-\phi\boldsymbol{\Delta}_{k,n,f'}=0.
\end{equation}
We will found that $\boldsymbol{\Delta}^{*}_{k,n,f'}=\frac{1}{\phi}\bold{W}^{\dagger}_{i,n,f'}.$ Following the same steps for eliminating the role of $\phi$, we obtain that 
\begin{equation}
\boldsymbol{\Delta}^{f',\text{max}}_{k,n}=e_{k,n,f'}\frac{\bold{W}^{\dagger}_{i,n,f'}}{\|\bold{W}^{\dagger}_{i,n,f'}\|}.
\end{equation}
Hessian of the Lagrangian function verifies that the obtained solution is maximum. Hence, we check the second derivative at the optimal point that should be negative semi-definite, i.e., \cite{Der_S} 
\begin{align}
\nonumber
   & \nabla^2_{\boldsymbol{\Delta}_{k,n,f}}\mathcal{L}(\boldsymbol{\Delta}_{k,n,f}^{*},\phi^{*})=
    \\&-\frac{\|\bold{W}^{\dagger}_{k,n,f}\|}{e_{k,n,f}}\bigg(\text{vec}\{\bold{I}_{M}\}\text{vec}\{\bold{I}_{M}\}\bigg)^{T}\preceq \bold{0}.
\end{align}
}
\vspace{-1em}
\textcolor{blue}{
\subsection{Proof of Proposition \ref{Pro_Binary}}
	We aim to prove this proposition by using the \textit{abstract Lagrangian} duality. The primal problem of (\ref{sic_order_change}) can be written as
$	p^{*}=\max_{\bold{W},\boldsymbol{\xi},\boldsymbol{\rho},\bold{x}}\min_{\lambda}\mathcal{L}(\bold{W},\boldsymbol{\xi},\boldsymbol{\rho},\bold{x}),$
	where the dual problem of (\ref{Reeq28}) is given by:
	\begin{equation}\label{62}
	d^{*}=\min_{\lambda}\max_{\bold{W},\boldsymbol{\xi},\boldsymbol{\rho},\bold{x}}\mathcal{L}(\bold{W},\boldsymbol{\xi},\boldsymbol{\rho},\bold{x}).
	\end{equation}
	For simplicity, we also define:
	\begin{equation}\label{63}
	\mu(\lambda)\triangleq\max_{\bold{W},\boldsymbol{\xi},\boldsymbol{\rho},\bold{x}}\mathcal{L}(\bold{W},\boldsymbol{\xi},\boldsymbol{\rho},\bold{x}).
	\end{equation}
	Based on the weak duality theorem, we have:
	\begin{align}\label{64}
	p^*\leq d^*
	=\min_{\lambda\geq 0}\mu(\lambda).
	\end{align}
	It should be noted that for the feasible set, we have 2 cases as follows:\\
	\textbf{\emph{Case~1}}: One can easily verify that at the optimal point, we have
	\begin{align}\label{65}
	&\mathcal{R}_{1}:\sum_{ i,k\in \mathcal{K}}\sum_{ n \in \mathcal{N}}\bigg( \xi_{i,n}^{k}-(\xi_{i,n}^{k})^{2}\bigg)=0,~
	\\&\mathcal{R}_{2}:\sum_{f\in \mathcal{F}}\sum_{ k\in \mathcal{K}}\sum_{ n \in \mathcal{N}} \bigg( \rho_{k,n}^{f}-(\rho_{k,n}^{f})^{2}\bigg)=0,\\&~\mathcal{R}_{3}:~\sum_{f\in \mathcal{F}}\sum_{ i,k\in \mathcal{K}}\sum_{ n \in \mathcal{N}}  \bigg(x_{i,n}^{k,f}-(x_{i,n}^{k,f})^{2}\bigg)=0. 
	\end{align}
	As a result, $d^*$ is also a feasible solution of (\ref{sic_order_change}). Subsequently, substituting the optimal value of $\lambda$, i.e., $\lambda^{*}$, into the optimization problem (\ref{sic_order_change})~yields
	\begin{equation}\label{66}
	d^*=\mu(\lambda^*)=\max_{\bold{W},\boldsymbol{\xi},\boldsymbol{\rho},\bold{x}}\mathcal{L}(\bold{W},\boldsymbol{\xi},\boldsymbol{\rho},\bold{x})=p^*.
	\end{equation}
	Moreover,~referring to Lagrangian function, in the region $\boldsymbol{\xi},\boldsymbol{\rho},~\bold{x},~\mathcal{R}_{1},~\mathcal{R}_{2},~\mathcal{R}_{3}$, function $\mu(\lambda)$ is a monotonically decreasing function with respect to $\lambda$. On the other hand, it is argued that $d^*=\min_{\lambda \geq 0} \mu(\lambda)$, so we have
	\begin{equation}\label{68}
	\mu(\lambda)= d^*, \forall ~\lambda\geq \lambda^{*}.
	\end{equation}
	This means that for any value of $\lambda\geq\lambda^*$, the solution of~Lagrangian function yields the optimal solution of (\ref{sic_order_change}). \\
	\textbf{\emph{Case~2}}: The second case occurs when~some of integer variables take some values between 0 and 1, causing
	\begin{equation}\label{69}
	\mathcal{R}_{1}>0,~\mathcal{R}_{2}>0,~\mathcal{R}_{3}>0.
	\end{equation}
	Referring to the Lagrangian function and (\ref{63}), at the optimal point, $\mu(\lambda^*)$ tends to $-\infty$. However, this can not happen as it contradicts with primal solution stating that $\mu(\lambda^*)$ is limited from below by the solution of (\ref{sic_order_change}) which is always greater than zero.
	Thus, at the optimal point, we have, $\mathcal{R}_{1}=0,~\mathcal{R}_{2}=0$,~and $\mathcal{R}_{3}=0$.
	}

\bibliographystyle{ieeetr}

\end{document}